\newif\ifDRAFT \DRAFTtrue    % DRAFT version
\renewcommand{\baselinestretch}{.985}
\newcommand{\eref}[1]{(\ref{#1})}                  % Eq. ()
\newcommand{\beq}{\begin{equation}}
\newcommand{\eeq}{\end{equation}}
\newcommand{\beqa}{\vspace{0mm} \begin{eqnarray}}
\newcommand{\eeqa}{\end{eqnarray} \vspace{0mm}}
\newcommand{\bi}{\begin{itemize}}
\newcommand{\ei}{\end{itemize}}
\newcommand{\ben}{\begin{enumerate}}
\newcommand{\een}{\end{enumerate}}
\newcommand{\lefto}{\mathopen{}\left}
\newcommand{\sizeparentheses}[1]{\lefto( #1 \right)}
\newcommand{\sizecurly}[1]{\lefto\{ #1 \right\}}
\newcommand{\mt}{\mathrm{M_T}} 					%number of TX antennas
\newcommand{\mr}{\mathrm{M_R}} 						%number of RX antennas
\newcommand{\minant}{\mathrm{m}} 			%min number of antennas (small)
\newcommand{\maxant}{\mathrm{M}} 				%max number of antennas (large)
\newcommand{\channelH}{{\matc{H}}} %Jensen channel matrix
\newcommand{\channelHw}{\matc{H}_{\mspace{-2.0mu}w}} %Jensen channel matrix white
\newcommand{\channelHwbar}{\overline{\matc{H}}_{\mspace{-2.0mu}w}}%Jensen channel matrix white bar
\newcommand{\setS}{\mathcal{S}}                                   % arbitrary set    
\newcommand{\cn}[2]{\mathcal{CN}\negthinspace\left(#1,#2\right)}             %complex normal distribution
\newcommand{\mean}[1]{\mathbb{E}\negthinspace\left\{#1\right\}}             % simple expectation operator
\newcommand{\mc}[1]{\mathcal{#1}}
\newcommand{\mat}[1]{\mathbf{#1}}  
\newcommand{\matc}[1]{\bv{\mathcal{#1}}}                             %matrix style
\newcommand{\bv}[1]{\mbox{\boldmath{$#1$}}}                     % bold vector, especially for lower-case greek letters
\newcommand{\msf}[1]{\mathsf{#1}}													% style:  mathsf
\newcommand{\mbb}[1]{\mathbb{#1}}													% style:  mathbb
\newcommand{\rank}[1]{\mathrm{rank}\negthinspace\left(#1\right)}
\newcommand{\vecop}[1]{\mathrm{vec} \negthinspace\left(#1\right)}             % vectorization operation
\newcommand{\expf}[1]{\exp{\negmedspace\sizeparentheses{#1}}}                    % exp()
\newcommand{\trace}[1]{\mathrm{Tr}\left(#1\right)}              % trace
\newcommand{\diag}[1]{\mathrm{diag}\negthinspace\left\{#1\right\}}             % diagonal matrix
\newcommand{\jensen}{\mathrm{J}}                               % jensen upper bound to mutual information
\newcommand{\mutinfexp}[2]{\log\det\negthinspace\left(\mat{I}_{#1}+ #2 \right)}           % explicit expression of mutual information
\newcommand{\snr}{\msf{SNR}}                                 % SNR 
\newcommand{\dotleq}{\mathrel{\dot{\leq}}} % < asymptotically
\newcommand{\dotgeq}{\mathrel{\dot{\geq}}} % > asymptotically
\newcommand{\prob}[1]{\mathbb{P}\negthinspace\left(#1\right)}
\newcommand{\set}[1]{\mathcal{S}_{#1}}                          % general set S indexed by #1
\newcommand{\outage}{\mathcal{O}}  
\newcommand{\joutage}{\mathcal{J}}
\newcommand{\codebook}{\mathcal{C}}
\def\@IEEEinterspaceratioM{0.265}
\def\@IEEEinterspaceMINratioM{0.1651}
\def\@IEEEinterspaceMAXratioM{0.38}
\def\@IEEEinterspaceratioB{0.31}
\def\@IEEEinterspaceMINratioB{0.19}
\def\@IEEEinterspaceMAXratioB{0.38}
\newtheorem{tc}{Theorem}
\newtheorem{cor}{Corollary}
\providecommand{\norm}[1]{\left\lVert#1\right\rVert}
\providecommand{\abs}[1]{\left\lvert#1\right\rvert}
\begin{document}
%-----------------------------------------------------------------------------%
% title                                                                       %
%-----------------------------------------------------------------------------%
\title{Selective-Fading Multiple-Access MIMO Channels: Diversity-Multiplexing Tradeoff and Dominant Outage Event Regions}
%
% author names and IEEE memberships
\author{\authorblockN{Pedro Coronel, Markus G\"artner, and Helmut B\"olcskei\vspace{.5mm}\\}
\authorblockA{Communication Technology Laboratory\\ 
ETH Zurich, 8092 Zurich, Switzerland \\
E-mail: \{pco, gaertner, boelcskei\}@nari.ee.ethz.ch}
\thanks{This work was supported in part by the Swiss National Science Foundation (SNF) under grant No. 200020-109619 and by the STREP project No. IST-026905 MASCOT within the Sixth Framework Programme of the European Commission. Parts of this work were presented at the \textit{IEEE Int. Symp. Inf. Theory (ISIT)}, Toronto, ON, Canada, July 2008.}
}

\IEEEoverridecommandlockouts

% make the title area
\maketitle
\ifDRAFT
\renewcommand{\baselinestretch}{2.0}
\normalsize
\fi

\vspace{-1cm}
%-----------------------------------------------------------------------------%
% abstract                                                                                  %
%-----------------------------------------------------------------------------%
\begin{abstract}

We establish the optimal diversity-multiplexing (DM) tradeoff for coherent selective-fading multiple-access MIMO channels and provide corresponding code design criteria. As a byproduct, on the conceptual level, we find an interesting relation between the DM tradeoff framework and the notion of dominant error event regions, first introduced in the AWGN case by Gallager, \textit{IEEE Trans. IT}, 1985. This relation allows us to accurately characterize the error mechanisms in MIMO fading multiple-access channels. In particular, we find that, for a given rate tuple, the maximum achievable diversity order is determined by a single outage event that dominates the total error probability exponentially in SNR. Finally, we examine the distributed space-time code construction 
proposed by Badr and Belfiore, \textit{Int. Zurich Seminar on Commun.}, 2008, using
the code design criteria derived in this paper. \looseness-1
\end{abstract}

%\ifDRAFT
%\newpage
%\tableofcontents
%\newpage
%\fi
%-----------------------------------------------------------------------------%
% introduction                                                                            %
%-----------------------------------------------------------------------------%
\section{Introduction}

The diversity-multiplexing (DM) tradeoff framework introduced by Zheng and Tse allows to efficiently characterize the information-theoretic performance limits of communication over multiple-input multiple-output (MIMO) fading channels both in the point-to-point \cite{ZheTse02} and in the multiple-access (MA) case \cite{TVZ04}. For coherent point-to-point flat-fading channels, DM tradeoff optimal code constructions have been reported in \cite{YaoWor03, GamCaiDam04, BelRekVit05 ,TW05}. The optimal DM tradeoff in point-to-point selective-fading MIMO channels was characterized in \cite{pco07}. In the MA case, the optimal DM tradeoff is known only for flat-fading channels \cite{TVZ04}. A corresponding DM tradeoff optimal code construction was reported in \cite{NamGam07}.%, BadBel08a}.

\textit{Contributions.} The aim of this paper is to characterize the DM tradeoff in selective-fading MIMO multiple-access channels (MACs) and to derive corresponding code design criteria. As a byproduct, on the conceptual level, we find an interesting relation between the DM tradeoff framework and the notion of dominant error event regions, first introduced in the case of additive white Gaussian noise (AWGN) MACs by Gallager \cite{Gallager85}. This relation leads to an accurate characterization of the error mechanisms in MIMO fading MACs. Furthermore, we extend the techniques introduced in \cite{pco07} for computing the DM tradeoff in point-to-point selective-fading channels to the MA case. Finally, we examine the distributed space-time code construction proposed in \cite{BadBel08} using the code design criteria derived in this paper.

\textit{Notation.} $\mt$ and $\mr$ denote, respectively, the number of transmit antennas for each user and the number of receive antennas. The set of all users is $\mc{U}=\{1,2,\ldots, U\}$, $\setS$ is a subset of $\mc{U}$ with $\bar{\setS}$ and $|\set{}|$ denoting its complement in $\mc{U}$ and its cardinality, respectively. The superscripts ${}^T$ and ${}^H$ stand for transposition and conjugate transposition, respectively. $\mat{A}\otimes \mat{B}$ and $\mat{A}\odot \mat{B}$ denote, respectively, the Kronecker and Hadamard products of the matrices $\mat{A}$ and $\mat{B}$. If $\mat{A}$ has the columns $\mat{a}_k$ ($k \negmedspace = \negmedspace 1, 2, \ldots, m$), $\vecop{\mat{A}}=[\mat{a}_1^T \:  \mat{a}_2^T \: \cdots \: \mat{a}_m^T]^T$. $\norm{\mat{a}}$ and $\norm{\mat{A}}_\mathrm{F}$ denote, respectively, the Euclidean norm of the vector $\mat{a}$ and the Frobenius norm of the matrix $\mat{A}$. For index sets $\mc{I}_1 \subseteq \sizecurly{1, 2, \ldots, n}$ and $\mc{I}_2 \subseteq \sizecurly{1, 2, \ldots, m}$, $\mat{A}(\mc{I}_{1},\mc{I}_{2})$ stands for the (sub)matrix consisting of the rows of $\mat{A}$ indexed by $\mc{I}_{1}$ and the columns of $\mat{A}$ indexed by $\mc{I}_{2}$. The eigenvalues of the $n\times n$ Hermitian matrix $\mat{A}$, sorted in ascending order, are denoted by $\lambda_k(\mat{A})$, $k=1,2,\ldots, n$. The Kronecker delta function is defined as $\delta_{n,m}=1$ for $n=m$ and zero otherwise. If  $X$ and $Y$ are random variables (RVs), $X\sim Y$ denotes equivalence in distribution and $\mathbb{E}_X$ is the expectation operator with respect to (w.r.t.) the RV $X$. The random vector $\mat{x} \sim \cn{\mat{0}}{\mat{C}}$ is zero-mean jointly proper Gaussian (JPG) with $\mean{\mat{x}\mat{x}^H}=\mat{C}$. $f(x)$ and $g(x)$ are said to be exponentially equal, denoted by $f(x)\doteq g(x)$, if $\lim_{x\rightarrow \infty} \frac{\log f(x)}{\log x} = \lim_{x\rightarrow \infty} \frac{\log g(x)}{\log x}$. Exponential inequality, indicated by $\:\dotgeq$ and $\dotleq$, is defined analogously. 

%-----------------------------------------------------------------------------%
% SECTION: SYSTEM MODEL                                          %
%-----------------------------------------------------------------------------%
\section{Channel and signal model\label{Sec.Model}}

We consider a selective-fading MAC where $U$ users, with $\mt$ transmit antennas each, communicate with a single receiver with $\mr$ antennas. The corresponding input-output relation is given by \looseness-1
\beqa
\mat{y}_n = \sqrt{\frac{\snr}{\mt}} \sum_{u=1}^U \mat{H}_{u,n} \:\mat{x}_{u, n} +\mat{z}_n,\: n=0, 1, \ldots, N-1,\label{Eq.SigModel}
\eeqa
where the index $n$ corresponds to a time, frequency, or time-frequency slot and SNR denotes the per-user signal-to-noise ratio at each receive antenna. The vectors $\mat{y}_n $, $\mat{x}_{u,n}$, and  $\mat{z}_n$ denote, respectively, the $\mr \times 1$ receive signal vector, the $\mt \times 1$ transmit signal vector corresponding to the $u$th user, and the $\mr \times 1$ zero-mean JPG noise vector satisfying $\mean{\mat{z}_n\mat{z}_{n' }^H}= \delta_{n,n'}\: \mat{I}_{\mr}$, all for the $n$th slot. We assume that the receiver has perfect knowledge of all channels and the transmitters do not have channel state information (CSI) but know the channel law.

We restrict our analysis to spatially uncorrelated Rayleigh fading channels so that, for a given $n$, $\mat{H}_{u,n}$ has i.i.d. $\cn{0}{1}$ entries. The channels corresponding to different users are assumed to be statistically independent. We do, however, allow for correlation across $n$ for a given $u$, and assume, for simplicity, that all scalar subchannels have the same correlation function so that $\mean{\mat{H}_{u,n}(i,j)\: (\mat{H}_{u',n'}(i,j))^*}=\mat{R}_\mbb{H}(n,n') \delta_{u,u'}$, for  $i=1, 2, \ldots, \mr$ and $ j=1, 2, \ldots, \mt$. The covariance matrix $\mat{R}_\mbb{H}$ is obtained from the channel's time-frequency correlation function \cite{Bello63}. In the sequel, we let $\rho\triangleq\rank{\mat{R}_\mbb{H}}$. For any set $\setS=\{u_1, \ldots, u_{|\setS|}\}$, we stack the corresponding users' channel matrices for a given slot index $n$ according to \looseness-1
\begin{equation}
\mat{H}_{\setS,n} = [\mat{H}_{u_1,n} \: \cdots \: \mat{H}_{u_{|\setS|},n}].\label{Eq.Hsn}
\end{equation}
With this notation, it follows that
\beqa
\mean{\vecop{\mat{H}_{\setS,n}}(\vecop{\mat{H}_{\setS,n'}})^H} = \mat{R}_\mbb{H}(n,n') \;\mat{I}_{|\setS|\mt\mr}\label{Eq.CovModel}.
\eeqa

%-----------------------------------------------------------------------------%
% SECTION: PRELIMINARIES                      %
%-----------------------------------------------------------------------------%
\section{Preliminaries}

Assuming that all users employ i.i.d. Gaussian codebooks\footnote{A standard argument along the lines of that used to obtain \cite[Eq. 9]{ZheTse02} shows that this assumption does not entail a loss of optimality in the high SNR regime, relevant to the DM tradeoff.}, the set of achievable rate tuples $(R_1,R_2, \ldots, R_U)$ for a given channel realization $\{\mat{H}_{\mc{U},n}\}_{n=0}^{N-1}$ is given by
\begin{equation}
\begin{split}
 \mc{R}&= \Bigg\{(R_1, R_2, \ldots, R_U): \forall \setS \subseteq\mathcal{U}, \\&R(\setS)\leq \frac{1}{N} \sum_{n=0}^{N-1}\mutinfexp{}{\frac{\snr}{\mt}\:\mat{H}_{\setS,n}\mat{H}_{\setS,n}^H}  \Bigg\}\label{Eq.ART}
\end{split}
\end{equation}
where $R(\setS) = \sum_{u\in \setS} R_u$. If a given rate tuple $(R_1, R_2, \ldots, R_U)\notin \mc{R}$, we say that the channel is in outage w.r.t. this rate tuple. Denoting the corresponding outage event as $\outage$, we have
\begin{equation}
\prob{\outage} = \prob{\bigcup_{\setS\:\subseteq\:\mathcal{U}} \outage_\setS} \label{Eq.P1} 
\end{equation}
where the $\setS$-outage event $\outage_\setS$ is defined as
\begin{equation}\label{Eq.Os}
\begin{split}
\outage_\setS &\triangleq \Bigg\{\{\mat{H}_{\setS,n}\}_{n=0}^{N-1}: \\ &\hspace{3mm}\frac{1}{N} \sum_{n=0}^{N-1}\mutinfexp{}{\frac{\snr}{\mt}\: \mat{H}_{\setS,n}\mat{H}_{\setS,n}^H} < R(\setS)\Bigg\}.
\end{split}
\end{equation}

Our goal is to characterize \eref{Eq.P1} as a function of the rate tuple $(R_1, R_2, \ldots, R_U)$ in the high-SNR regime and to establish sufficient conditions on the users' codebooks to guarantee that the corresponding error probability is exponentially (in SNR) equal to $\prob{\outage}$. To this end, we employ the DM tradeoff framework \cite{ZheTse02}, which, in its MA version \cite{TVZ04}, will be briefly summarized next.\looseness-1

In the DM tradeoff framework, the data rate of user $u$ scales with SNR as $R_u(\snr) = r_u \log\snr$, where $r_u$ denotes the multiplexing rate. Consequently, a sequence of codebooks $\codebook_{r_u}(\snr)$, one for each SNR, is required. We say that this sequence of codebooks constitutes a family of codes $\codebook_{r_u}$ operating at multiplexing rate $r_u$. The family $\codebook_{r_u}$ is assumed to have block length $N$. At any given SNR, $\codebook_{r_u}(\snr)$ contains codewords $\mat{X}_u= [\mat{x}_{u,0}\: \mat{x}_{u,1}\: \cdots \:\mat{x}_{u,N-1}]$ satisfying the per-user power constraint \looseness-1
\begin{equation}
\norm{\mat{X}_u}_\mathrm{F}^2 \leq \mt N, \; \forall\:\mat{X}_u \in \codebook_{r_u}.\label{Eq.PC}
\end{equation}
In the remainder of the paper, we will say ``the power constraint \eref{Eq.PC}'' to mean that \eref{Eq.PC} has to be satisfied for $u=1,2,\ldots, U$. The overall family of codes is given by $\codebook_{\bm{r}}=\codebook_{r_1}\times \codebook_{r_2}\times \cdots \times \codebook_{r_U}$, where $\bm{r}=(r_1, r_2, \ldots, r_U)$ denotes the multiplexing rate tuple\footnote{Throughout the paper, we consider multiplexing rate tuples lying within the boundaries determined by the ergodic capacity region.}. At a given SNR, the corresponding codebook $\codebook_{\bm{r}}(\snr)$ contains $\snr^{Nr(\mc{U})}$ codewords with $r(\mc{U})=\sum_{u=1}^U r_u$. \looseness-1

The DM tradeoff realized by $\codebook_{\bm{r}}$ is characterized by the function 
\begin{equation*}
d(\codebook_{\bm{r}})=-\lim_{\snr\rightarrow\infty}\frac{\log P_e(\codebook_{\bm{r}})}{\log\snr}
\end{equation*} 
where $P_e(\codebook_{\bm{r}})$ is the \textit{total} error probability obtained through maximum-likelihood (ML) decoding, that is, the probability for the receiver to decode at least one user in error. The optimal DM tradeoff curve $d^\star\mspace{-2.0mu}(\bm{r}) = \sup_{\codebook_{\bm{r}}} d(\codebook_{\bm{r}})$, where the supremum is taken over all possible families of codes satisfying the power constraint \eref{Eq.PC}, quantifies the maximum achievable diversity order as a function of the multiplexing rate tuple $\bm{r}$. Since the outage probability $\prob{\outage}$ is a lower bound (exponentially in SNR) on the error probability of any coding scheme \cite[Lemma 7]{TVZ04}, we have\looseness-1
\begin{equation}
d^\star\mspace{-2.0mu}(\bm{r}) \leq - \lim_{\snr\rightarrow\infty}\frac{\log\prob{\outage}}{\log\snr}\label{Eq.OutLB}
\end{equation}
where the outage event $\outage$, defined in \eref{Eq.P1} and \eref{Eq.Os}, is w.r.t. the rates $R_u(\snr)=r_u \log\snr$, $\forall u$. As an extension of the corresponding result for the flat-fading case \cite{TVZ04}, we shall show that \eref{Eq.OutLB} holds with equality also for selective-fading MACs. However, just like in the case of point-to-point channels, a direct characterization of the right-hand side (RHS) of \eref{Eq.OutLB} for the selective-fading case seems analytically intractable since one has to deal with the sum of correlated (recall that the $\mat{H}_{u,n}$ are correlated across $n$) terms in \eref{Eq.Os}. In the next section, we show how the technique introduced in \cite{pco07} for characterizing the DM tradeoff of point-to-point selective-fading MIMO channels can be extended to the MA case.\looseness-1

%-----------------------------------------------------------------------------%
% SECTION: OPTIMAL DM TRADEOFF CURVE                      %
%-----------------------------------------------------------------------------%
\section{Computing the optimal DM tradeoff curve}

\subsection{Lower bound on $\prob{\outage_\setS}$}

First, we derive a lower bound on the individual terms $\prob{\outage_\setS}$. We start by noting that for any set $\setS\subseteq \mc{U}$, Jensen's inequality provides the following upper bound:
%\begin{multline}
\begin{equation}
\frac{1}{N} \sum_{n=0}^{N-1} \mutinfexp{}{\frac{\snr}{\mt}\mat{H}_{\setS,n}\mat{H}_{\setS, n}^H}
\leq \log\det\sizeparentheses{\mat{I}+\frac{\snr}{\mt N}\channelH_\setS \channelH_\setS^H} \triangleq \jensen_\setS(\snr) \label{Eq.Jensen}
%\end{multline}
\end{equation}
where the ``Jensen channel" \cite{pco07} is defined as 
\begin{equation}
\channelH_\setS = \begin{cases} [\mat{H}_{\setS ,0} \; \mat{H}_{\setS ,1}\; \cdots \; \mat{H}_{\setS ,N-1}], & \text{if $\mr\leq|\setS|\mt$,}\\ 
[\mat{H}_{\setS ,0}^H \; \mat{H}_{\setS ,1}^H\; \cdots \; \mat{H}_{\setS ,N-1}^H], & \text{if $\mr > |\setS|\mt$.} \end{cases}
\end{equation} 
Consequently, $\channelH_\setS$ has dimension $\minant(\setS) \times N \maxant(\setS)$, where
\begin{align}
\minant(\setS)&\triangleq\min(|\setS|\mt,\mr)\\
\maxant(\setS)&\triangleq\max(|\setS|\mt,\mr).
\end{align} 
In the following, we say that the event $\joutage_\setS$ occurs if the Jensen channel $\channelH_\setS$ is in outage w.r.t. the rate $r(\setS)\log\snr$, where $r(\setS) = \sum_{u\in \setS} r_u$, i.e., $\joutage_\setS\triangleq\sizecurly{\jensen_\setS(\snr)<r(\setS)\log\snr}$. From \eref{Eq.Jensen}, we can conclude that, obviously, $\prob{\joutage_\setS}\leq\prob{\outage_\setS}$. \looseness-1

We next characterize the Jensen outage probability analytically. Recalling \eref{Eq.CovModel}, we start by expressing $\channelH_\setS$ as $\channelH_\setS = \channelHw (\mat{R}^{T/2} \otimes \mat{I}_{\maxant(\setS)})$, where $\mat{R}=\mat{R}_\mbb{H}$, if $\mr \leq |\setS|\mt$, and $\mat{R}=\mat{R}_\mbb{H}^T$, if $\mr > |\setS|\mt$, and $\channelHw$ is an i.i.d. $\cn{0}{1}$ matrix with the same dimensions as $\channelH_\setS$ given by
\begin{equation}
\channelHw = \begin{cases} \; [\mat{H}_{w ,0} \; \mat{H}_{w ,1}\; \cdots \; \mat{H}_{w ,N-1}], & \text{if $\mr\leq|\setS|\mt$,} \\ 
[\mat{H}_{w ,0}^H \; \mat{H}_{w ,1}^H\; \cdots \; \mat{H}_{w ,N-1}^H], & \text{if $\mr > |\setS|\mt$.}\end{cases} \label{Eq.Hwn}
\end{equation}
Here, $\mat{H}_{w ,n}$ denotes i.i.d. $\cn{0}{1}$ matrices of dimension $\mr\times |\setS|\mt$. Using $\channelHw \mat{U} \sim \channelHw$, for any unitary $\mat{U}$, and $\lambda_n(\mat{R}_\mbb{H})=\lambda_n(\mat{R}_\mbb{H}^T)$ for all $n$, we get 
\begin{equation}
\channelH_\setS\channelH_\setS^H \sim \channelHw (\mat{\Lambda}\otimes \mat{I}_{\maxant(\setS)}) \channelHw ^H
\end{equation} 
where $\mat{\Lambda}=\diag{\lambda_1(\mat{R}_\mbb{H}), \lambda_2(\mat{R}_\mbb{H}), \ldots, \lambda_\rho(\mat{R}_\mbb{H}), 0,\ldots, 0}$. Setting $\channelHwbar = \channelHw([1:\minant(\setS)], [1:\rho\maxant(\setS)])$, it was shown in \cite{pco07} that $\prob{\joutage_\setS}$ is nothing but the outage probability of an effective MIMO channel with $\rho\maxant(\setS)$ transmit and $\minant(\setS)$ receive antennas and satisfies\looseness-1
\begin{align}
\prob{\joutage_\setS} &\doteq \prob{\log\det\sizeparentheses{\mat{I}_{}+ \snr \:\channelHwbar\channelHwbar^H}\negmedspace< r(\setS) \log\snr}\notag\\ &\doteq \snr^{-d_\setS(r(\setS))}\label{Eq.Outprob2}
\end{align}
where we infer from the results in \cite{ZheTse02} that $d_\setS(r)$ is the piecewise linear function connecting the points $(r,d_\setS(r))$ for $r=0, 1, \ldots, \minant(\setS)$, with
\begin{equation}
d_\setS(r) = (\minant(\setS)-r)(\rho\maxant(\setS)-r).\label{Eq.JensenCurve}
\end{equation}
Since, as already noted, $\prob{\outage_\setS}\geq\prob{\joutage_\setS}$, it follows from \eref{Eq.Outprob2} that
\begin{equation}
\prob{\outage_\setS} \dotgeq \snr^{-d_\setS(r(\setS))} \label{Eq.LBOs}
\end{equation}
which establishes the desired lower bound.
%In the next section, we show that the RHS of \eref{Eq.LBOs} is also an upper bound, in the $\dotleq$ sense, on the probability of erroneously decoding all the users in $\setS$ while correctly decoding the users in $\bar{\setS}$. This result then paves the way for establishing the optimal DM tradeoff.\looseness-1

%Invoking \cite[Lemma 5]{ZheTse02}, we can conclude that the RHS of constitutes a lower bound on the error probability corresponding to the users in $\setS$ (note that this error probability differs from the total error probability $P_e(\codebook_{\bm{r}})$). In the next subsection, we derive an upper bound on $\prob{\outage_\setS}$ which is exponentially equal to the RHS of \eref{Eq.LBOs}.

\subsection{Error event analysis}

Following \cite{Gallager85,TVZ04}, we decompose the total error probability into $2^U-1$ disjoint error events according to
\begin{equation}
P_e(\codebook_{\bm{r}}) = \sum_{\setS\:\subseteq\;\mc{U}} \prob{\mc{E}_\setS}\label{Eq.PeDecomposition}
\end{equation}
where the $\setS$-error event $\mc{E}_\setS$ corresponds to \textit{all} the users in $\setS$ being decoded in error and the remaining users being decoded correctly. More precisely, we have
\begin{equation}\label{Eq.SError}
\mc{E}_\setS \triangleq \sizecurly{(\hat{\mat{X}}_u\neq \mat{X}_u, \forall u \in \setS)\; \land\; (\hat{\mat{X}}_u= \mat{X}_u, \forall u \in \bar{\setS})}
\end{equation}
where ${\mat{X}}_u$ and $\hat{\mat{X}}_u$ are, respectively, the transmitted and ML-decoded codewords corresponding to user $u$. We note that, in contrast to the outage events $\outage_\setS$ defined in \eref{Eq.Os}, the error events $\mc{E}_\setS$ are disjoint. The following result establishes the DM tradeoff optimal code design criterion for a specific error event $\mc{E}_\setS$.\looseness-1

\begin{tc}\label{Th.CDC}
For every $u\in \setS$, let $\codebook_{r_u}$ have block length $N\geq\rho|\setS|\mt$. Let the nonzero\footnote{Here, we actually mean the eigenvalues that are not identically equal to zero for all SNR values. This fine point will be made clear in the proof.} eigenvalues of $\mat{R}_\mbb{H}^T \odot (\sum_{u\in\setS} \mat{E}_u^H\mat{E}_u)$, where $\mat{E}_u=\mat{X}_u-\mat{X}_u'$ and $\mat{X}_u$, $\mat{X}_u' \in \codebook_{r_u}(\snr)$, be given---in ascending order---at every SNR level by $\uplambda_n(\snr)$, $n=1, 2, \ldots, \rho|\setS|\mt$. Furthermore, set
\begin{equation}
\Lambda_{\minant(\setS)}^{\rho|\setS|\mt}(\snr) \triangleq \min_{\begin{subarray}{c} \{\mat{E}_u=\mat{X}_u-\mat{X}_u'\}_{u\in \setS}\\\mat{X}_u,\mat{X}_u' \:\in\: \codebook_{r_u}(\snr)\end{subarray}} \quad \prod_{k=1}^{\minant(\setS)} \uplambda_k(\snr). \label{Eq.DefLambda}
\end{equation}
If there exists an $\epsilon>0$ independent of $\snr$ and $r$ such that
\begin{equation}
\Lambda_{\minant(\setS)}^{\rho|\setS|\mt}(\snr)\dotgeq \snr^{-(r(\setS)-\epsilon)},\label{Eq.ThCDC}
\end{equation} 
then, under ML decoding, $\prob{\mc{E}_\setS} \dotleq \snr^{-d_\setS(r(\setS))}$.
\end{tc}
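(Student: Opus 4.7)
The plan is to combine a pairwise-error analysis with a Zheng--Tse-style typical-outage decomposition. Split $\prob{\mc{E}_\setS} \leq \prob{\joutage_\setS} + \prob{\mc{E}_\setS, \joutage_\setS^c}$; the first term is already $\dotleq \snr^{-d_\setS(r(\setS))}$ by \eref{Eq.Outprob2}, so only the non-outage part must be controlled at the same exponential level.

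Fix a transmitted tuple $\{\mat{X}_u\}_{u\in\setS}$ and an erroneous tuple $\{\mat{X}_u'\}_{u\in\setS}$ consistent with $\mc{E}_\setS$, and let $\mat{e}_{\setS,n}$ denote the stacked user-error vector at slot $n$. A Chernoff bound on the ML decoding metric yields
\begin{equation*}
\prob{\mat{X}_\setS \to \mat{X}_\setS' \mid \{\mat{H}_{\setS,n}\}} \leq \exp\negmedspace\sizeparentheses{-\tfrac{\snr}{4\mt} \sum_{n=0}^{N-1}\negmedspace\norm{\mat{H}_{\setS,n} \mat{e}_{\setS,n}}^2}.
\end{equation*}
Since the rows of $\mat{H}_{\setS,n}$ (indexed by the receive antenna) are i.i.d.\ and, within each row, correlated across $n$ through $\mat{R}_\mbb{H}$, the exponent is a sum of $\mr$ i.i.d.\ JPG quadratic forms. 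A direct vectorization/Kronecker computation identifies their common covariance as a matrix whose nonzero eigenvalues coincide with the $\uplambda_k(\snr)$ in the statement; the Gaussian MGF identity then delivers
\begin{equation*}
\mean{\prob{\mat{X}_\setS \to \mat{X}_\setS' \mid \{\mat{H}_{\setS,n}\}}} \leq \prod_{k=1}^{\rho|\setS|\mt}\sizeparentheses{1+\tfrac{\snr}{4\mt}\uplambda_k(\snr)}^{-\mr},
\end{equation*}
the rank bound $\rank{\mat{A}\odot\mat{B}} \leq \rank{\mat{A}}\rank{\mat{B}}$ together with $N \geq \rho|\setS|\mt$ capping the number of non-identically-zero factors at $\rho|\setS|\mt$.

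Union-bounding over the at most $\snr^{Nr(\setS)}$ codeword pairs compatible with $\mc{E}_\setS$, I would convert the MGF bound into an $\snr$-exponent by setting $\alpha_k(\snr) = -\log\uplambda_k(\snr)/\log\snr$ (ordered $\alpha_1 \geq \cdots \geq \alpha_{\rho|\setS|\mt} \geq 0$ to match the ascending ordering of the $\uplambda_k$), rewriting the exponent of the PEP as $\mr\sum_k(1-\alpha_k)^+$, and invoking \eref{Eq.ThCDC} in the form $\sum_{k=1}^{\minant(\setS)}\alpha_k \leq r(\setS)-\epsilon$. The delicate step is the final optimization, as \eref{Eq.ThCDC} controls only the $\minant(\setS)$ smallest eigenvalues while up to $\rho|\setS|\mt$ enter the MGF, so a crude worst-case bound is too loose. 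The remedy, paralleling the point-to-point selective-fading argument of \cite{pco07}, is to exploit both the descending-order constraint $\alpha_k \leq \alpha_{\minant(\setS)}$ for $k > \minant(\setS)$ and the non-outage conditioning on $\joutage_\setS^c$---which tightens the averaged PEP through the Jensen-channel eigenvalue distribution underlying \eref{Eq.JensenCurve}---and to carry the resulting optimization through separately in the regimes $\mr \leq |\setS|\mt$ and $\mr > |\setS|\mt$. This should recover the exponent $d_\setS(r(\setS))$ on the non-outage part, giving overall $\prob{\mc{E}_\setS} \dotleq \snr^{-d_\setS(r(\setS))}$.
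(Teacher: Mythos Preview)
Your decomposition $\prob{\mc{E}_\setS}\leq\prob{\joutage_\setS}+\prob{\mc{E}_\setS,\bar{\joutage}_\setS}$, the Chernoff PEP, and the identification of the relevant eigenvalues via $\mat{\Upsilon}^H\mat{\Upsilon}=\mat{R}_\mbb{H}^T\odot\sum_{u\in\setS}\mat{E}_u^H\mat{E}_u$ all match the paper. The gap is in how you handle the second term.

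You average the conditional PEP over the \emph{full} channel law to get the MGF product $\prod_k(1+\tfrac{\snr}{4\mt}\uplambda_k)^{-\mr}$, union-bound over $\snr^{Nr(\setS)}$ competitors, and then propose to ``tighten the averaged PEP'' via the non-outage conditioning. That last step is not well-defined: once you have integrated the channel out, there is nothing left to condition on, and the $\alpha_k$ you introduce are exponents of the \emph{code} eigenvalues, which are deterministic and unaffected by $\bar{\joutage}_\setS$. Without the conditioning, the route genuinely fails: the worst $\bm{\alpha}$ consistent with \eref{Eq.ThCDC} and the ordering constraint is $\alpha_k=(r(\setS)-\epsilon)/\minant(\setS)$ for all $k$, giving a union-bounded exponent $\mr\,\rho|\setS|\mt\big(1-\tfrac{r(\setS)-\epsilon}{\minant(\setS)}\big)-Nr(\setS)$, which falls short of $d_\setS(r(\setS))$ by $r(\setS)\big(N-\minant(\setS)+r(\setS)\big)-\rho|\setS|\mt\,\epsilon>0$ for any $r(\setS)>0$ and small $\epsilon$. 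No exponent optimization over the code-eigenvalue $\alpha_k$ alone can absorb the $\snr^{Nr(\setS)}$ prefactor.

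The paper sidesteps the optimization entirely. It keeps the PEP \emph{conditional on the channel realization} and restricts to channels in $\bar{\joutage}_\setS$. On that event the Jensen mutual information satisfies $\jensen_\setS(\snr)\geq r(\setS)\log\snr$, which lower-bounds the product of the $\minant(\setS)$ eigenvalues of $\channelH_\setS\channelH_\setS^H$; combining this with \eref{Eq.ThCDC} on the code side (this is the point-to-point argument of \cite{pcoj07}) yields a conditional PEP that is uniformly at most $\exp\big(-\tfrac{1}{4\mt}\snr^{\epsilon/\minant(\setS)}\big)$. This is stretched-exponential in $\snr$, so the polynomial factor $\snr^{Nr(\setS)}$ from the union bound is irrelevant, and $\prob{\mc{E}_\setS\mid\bar{\joutage}_\setS}$ decays faster than any power of $\snr$. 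The Jensen-outage term $\prob{\joutage}_\setS\doteq\snr^{-d_\setS(r(\setS))}$ is then the sole contributor. The idea you are missing is that the non-outage error probability is not merely $\dotleq\snr^{-d_\setS(r(\setS))}$; it is $\dotleq\snr^{-\infty}$, and that is precisely what neutralizes the union bound.
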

\begin{proof}
We start by deriving an upper bound on the average (w.r.t. the random channel) pairwise error probability (PEP) of an $\setS$-error event. From \eref{Eq.SError}, we note that $\mat{E}_{u}=[\mat{e}_{u,0} \: \mat{e}_{u,1} \: \cdots \: \mat{e}_{u,N-1}]$, with $\mat{e}_{u,n}=\mat{x}_{u,n}-\mat{x}_{u,n}'$, is nonzero for $u\in \setS$ but $\mat{E}_u=\mat{0}$ for any $u\in \bar{\setS}$. Assuming, without loss of generality, that $\setS=\{1, \ldots, |\setS|\}$, the probability of the ML decoder mistakenly deciding in favor of the codeword $\mat{X}'$ when $\mat{X}$ was actually transmitted can be upper-bounded in terms of $\mat{X}-\mat{X}'= [\mat{E}_{1}\:\cdots\:\mat{E}_{|\setS|}\: \mat{0}\:\cdots\:\mat{0}]$ as\looseness-1
\begin{equation}\begin{split}
&\prob{\mat{X}\rightarrow\mat{X}'}\\
&\negmedspace\leq \mathbb{E}_{\{\mat{H}_{\setS,n}\}_{n=0}^{N-1}}\sizecurly{\negthinspace\expf{\negthinspace-\frac{\snr}{4\mt}\sum_{n=0}^{N-1}\trace{\mat{H}_{\setS,n}\mat{e}_n\mat{e}_n^H\mat{H}_{\setS,n}^H}\negmedspace}\negmedspace}\label{p1}
\end{split}\end{equation}
where 
\begin{equation*}
\trace{\mat{H}_{\setS,n}\mat{e}_n\mat{e}_n^H\mat{H}_{\setS,n}^H}=\norm{\sum_{u\in\setS}\mat{H}_{u,n} \mat{e}_{u,n}}^2
\end{equation*} 
with $\mat{H}_{\setS,n}$ defined in \eref{Eq.Hsn} and $\mat{e}_n=[\mat{e}_{u_1,n}^T\:\cdots\:\mat{e}_{u_{|\setS|},n}^T]^T$. Setting $\mat{H}_{\setS}=[\mat{H}_{\setS,0}\;\mat{H}_{\setS,1}\;\cdots\;\mat{H}_{\setS,N-1}]$, we get from \eref{p1}\looseness-1
\begin{align}
&\prob{\mat{X}\rightarrow\mat{X}'} \notag \\ &\leq \mathbb{E}_{\mat{H}_\setS}\sizecurly{\expf{-\frac{\snr}{4\mt}\trace{\mat{H}_{\setS}\:\diag{\mat{e}_n\mat{e}_n^H}_{n=0}^{N-1}\mat{H}_{\setS}^H}}}\notag\\
&= \mathbb{E}_{\mat{H}_w}\sizecurly{\expf{-\frac{\snr}{4\mt}\trace{\mat{H}_{w}\mat{\Upsilon}\mat{\Upsilon}^H\mat{H}_{w}^H}}}\label{p2}
\end{align}
where we have used $\mat{H}_\setS=\mat{H}_w (\mat{R}_\mbb{H}^{T/2} \otimes \mat{I}_{|\setS|\mt})$ with $\mat{H}_w$ an $\mr \times N |\setS|\mt$ matrix with i.i.d. $\cn{0}{1}$ entries and 
\begin{equation}
\mat{\Upsilon}=(\mat{R}_{\mbb{H}}^{T/2} \otimes \mat{I}_{|\setS|\mt}) \:\diag{\mat{e}_n}_{n=0}^{N-1}.\label{Eq.WhiteMat}
\end{equation}
We note that $\mat{\Upsilon}^H\mat{\Upsilon}=\mat{R}_\mbb{H}^T \odot (\sum_{u\in\setS}\mat{E}_u^H\mat{E}_u)$, where we have $\rank{\sum_{u\in\setS}\mat{E}_u^H\mat{E}_u}\leq |\setS|\mt$ because $\mat{E}_u$ has dimension $\mt\times N$ and $N\geq |\setS|\mt$ by assumption. Recalling that $\rank{\mat{R}_\mbb{H}}=\rho$ and using the property $\rank{\mat{A}\odot\mat{B}}\leq\rank{\mat{A}}\rank{\mat{B}}$, it follows that $\rank{\mat{\Upsilon}^H\mat{\Upsilon}}\leq\rho|\setS|\mt$, which is to say that $\mat{\Upsilon}^H\mat{\Upsilon}$ has at most $\rho|\setS|\mt$ eigenvalues that are not identically equal to zero for all SNRs. We stress, however, that these eigenvalues may decay to zero as a function of SNR. Next, using the fact that for any matrix $\mat{A}$ the nonzero eigenvalues of $\mat{A}\mat{A}^H$ equal the nonzero eigenvalues of $\mat{A}^H\mat{A}$, the assumption (made in the statement of the theorem) that $\mat{R}_\mbb{H}^T \odot (\sum_{u\in\setS}\mat{E}_u^H\mat{E}_u)$ has $\rho|\setS|\mt$ eigenvalues that are not identically equal to zero for all SNRs implies that so does $\mat{\Upsilon \Upsilon}^H$. The remainder of the proof proceeds along the lines of the proof of \cite[Th. 1]{pcoj07}. In particular, we split and subsequently bound the $\setS$-error probability as
\begin{align}
\prob{\mc{E}_\setS} &= \prob{\mc{E}_\setS, \joutage_\setS} + \prob{\mc{E}_\setS, \bar{\joutage}_\setS}\notag\\
&= \prob{\joutage_\setS} \underbrace{\prob{\mc{E}_\setS| \joutage_\setS}}_{\leq 1} \notag\\
&{\hspace{7mm}}+ \underbrace{\prob{\bar{\joutage}_\setS}}_{\leq 1} \prob{\mc{E}_\setS| \bar{\joutage}_\setS}\notag\\
&\leq  \prob{\joutage_\setS} +\prob{\mc{E}_\setS | \bar{\joutage}_\setS}.\label{Eq.UpBoundErrorProb}
\end{align}
As detailed in the proof for the point-to-point case given in \cite{pcoj07}, the code design criterion \eref{Eq.ThCDC} yields the following upper bound on the second term in \eref{Eq.UpBoundErrorProb}:
\begin{align}\label{Eq.UnionBound}
\prob{\mc{E}_\setS| \bar{\joutage}_\setS} &\dotleq \snr^{Nr(\setS)} \expf{- \frac{\snr^{\epsilon/{\minant(\setS)}}}{4\mt}}.
\end{align}
%Inserting \eref{Eq.UnionBound} into \eref{Eq.UpBoundErrorProb} now yields $\prob{\mc{E}_\setS}\dotleq \prob{\joutage_\setS}$. Hence, since $\prob{\joutage_\setS} \doteq \snr^{-d_\setS(r(\setS))}$, we can finally conclude that $\prob{\mc{E}_\setS} \dotleq \snr^{-d_\setS(r(\setS))}$.

In contrast to the Jensen outage probability which satisfies $\prob{\joutage_\setS} \doteq \snr^{-d_\setS(r(\setS))}$, the RHS of \eref{Eq.UnionBound} decays exponentially in SNR. Hence, upon inserting \eref{Eq.UnionBound} into \eref{Eq.UpBoundErrorProb}, we get $\prob{\mc{E}_\setS}\dotleq \prob{\joutage_\setS}$, and can therefore conclude that $\prob{\mc{E}_\setS} \dotleq \snr^{-d_\setS(r(\setS))}$.
\end{proof}

In summary, for every $\mc{E}_\setS$, \eref{Eq.ThCDC} constitutes a sufficient condition on $\{\codebook_{r_u}: u\in \setS\}$ for $\prob{\mc{E}_\setS}$ to be exponentially upper-bounded by $\prob{\joutage_\setS}$. This condition is nothing but the DM tradeoff optimal code design criterion for a point-to-point channel with $|\setS|\mt$ transmit antennas and $\mr$ receive antennas presented in \cite{pcoj07}. In order to satisfy this condition, the users' codebooks have to be designed jointly. We stress, however, that this does not require cooperation among users at the time of communication. We are now ready to establish the optimal DM tradeoff for the selective-fading MAC and provide corresponding design criteria on the overall family of codes $\codebook_{\bm{r}}$.\looseness-1

\subsection{Optimal code design}

We start by noting that \eref{Eq.P1} implies $\prob{\outage} \geq \prob{\outage_\setS}$ for any $\setS\subseteq \mc{U}$, which combined with \eref{Eq.LBOs} gives rise to $2^U-1$ lower bounds on $\prob{\outage}$. For a given multiplexing rate tuple $\bm{r}$, the tightest lower bound (exponentially in SNR) corresponds to the set $\setS$ that yields the smallest SNR exponent $d_\setS(r(\setS))$. More precisely, the tightest lower bound is \looseness-1
\begin{equation}
\prob{\outage} \dotgeq \snr^{-d_{\setS^\star}(r(\setS^\star))} \label{Eq.LB}
\end{equation}
with the dominant outage event given by $\outage_{\setS^\star}$, where
\begin{equation}
\setS^\star \triangleq \underset{\setS\:\subseteq\:\mc{U}}{\arg\min} \; d_\setS(r(\setS))\label{Eq.TEE}
\end{equation}
is the dominant outage set. Next, we show that, for any multiplexing rate tuple, the total error probability $P_e(\codebook_{\bm{r}})$ can be made exponentially equal to the RHS of \eref{Eq.LB} by appropriate design of the users' codebooks. As a direct consequence thereof, using \eref{Eq.OutLB}, \eref{Eq.LB}, and $P_e(\codebook_{\bm{r}}) \dotgeq \prob{\outage}$ \cite[Lemma 7]{TVZ04}, we then obtain that $d_{\setS^\star}(r(\setS^\star))$ constitutes the optimal DM tradeoff of the selective-fading MIMO MAC. Before presenting this result, let us define the function $r_{\scriptscriptstyle\setS}(d)$ as the inverse of $d_\setS(r)$, i.e., $d=d_\setS\big(r_{\scriptscriptstyle\setS}(d)\big)$ and $r=r_\setS\big(d_{\scriptscriptstyle\setS}(r)\big)$. We note that $r_{\scriptscriptstyle\setS}(d)$ is a decreasing function of $d$ and $d_\setS(r)$ is a decreasing function of $r$.

\begin{tc}\label{Th.Proc}
The optimal DM tradeoff of the selective-fading MIMO MAC in \eref{Eq.SigModel} is given by $d^\star(\bm{r})=d_{\setS^\star}(r(\setS^\star))$, that is
\begin{equation}\label{Eq.ThOptDMT}
d^\star(\bm{r})=(\minant(\setS^\star)-r(\setS^\star))(\rho\maxant(\setS^\star)-r(\setS^\star)).
\end{equation}
Moreover, if the overall family of codes $\codebook_{\bm{r}}$ satisfies \eref{Eq.ThCDC} for the dominant outage set $\setS^\star$ and, for every $\setS\neq\setS^\star$, there exists $\epsilon>0$ such that
\begin{equation}\label{Eq.NewCDC}
\Lambda_{\minant(\setS)}^{\rho|\setS|\mt}(\snr)\dotgeq \snr^{-(\gamma_\setS-\epsilon)}
\end{equation}
where
\begin{equation}
0\leq \gamma_\setS \leq r_\setS(d_{\setS^\star}(r(\setS^\star)))\label{Eq.gamma}
\end{equation}
then 
\begin{equation}\label{Eq.ThOptCode}
d(\codebook_{\bm{r}}) =d^\star(\bm{r}).
\end{equation}
\end{tc}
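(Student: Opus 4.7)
The plan is to prove the two directions of the DM-tradeoff equality separately, leveraging the outage lower bound (\ref{Eq.LB}) for the converse and a refinement of Theorem~\ref{Th.CDC} for the achievability part. From (\ref{Eq.OutLB}) together with (\ref{Eq.LB}) we immediately get $d^\star(\bm{r})\leq d_{\setS^\star}(r(\setS^\star))$, so the core of the argument is to show that the overall family of codes $\codebook_{\bm{r}}$ satisfying (\ref{Eq.ThCDC}) for $\setS^\star$ and (\ref{Eq.NewCDC})--(\ref{Eq.gamma}) for all other $\setS$ achieves $P_e(\codebook_{\bm{r}})\dotleq \snr^{-d_{\setS^\star}(r(\setS^\star))}$. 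Since there are only $2^U-1$ terms in the decomposition (\ref{Eq.PeDecomposition}), it is enough to bound each $\prob{\mc{E}_\setS}$ by $\snr^{-d_{\setS^\star}(r(\setS^\star))}$ exponentially; combining the bound with $P_e\dotgeq \prob{\outage}$ from \cite[Lemma~7]{TVZ04} then yields both (\ref{Eq.ThOptDMT}) and (\ref{Eq.ThOptCode}).

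For the dominant set $\setS^\star$, Theorem~\ref{Th.CDC} applies directly and gives $\prob{\mc{E}_{\setS^\star}}\dotleq \snr^{-d_{\setS^\star}(r(\setS^\star))}$. For every other $\setS$, however, the code condition (\ref{Eq.ThCDC}) at rate $r(\setS)$ would be unnecessarily strong; since $d_\setS(r(\setS))\geq d_{\setS^\star}(r(\setS^\star))$ by definition of $\setS^\star$, and $d_\setS(\cdot)$ is decreasing, the inverse relation implies $r_\setS(d_{\setS^\star}(r(\setS^\star)))\geq r(\setS)$, so the requirement (\ref{Eq.NewCDC}) with $\gamma_\setS$ satisfying (\ref{Eq.gamma}) is indeed weaker than (\ref{Eq.ThCDC}). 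I would therefore establish a parametrized version of Theorem~\ref{Th.CDC}: introduce the $\gamma$-Jensen outage event $\joutage_\setS^{\gamma}\triangleq\{\jensen_\setS(\snr)<\gamma\log\snr\}$ and split, exactly as in (\ref{Eq.UpBoundErrorProb}),
\begin{equation*}
\prob{\mc{E}_\setS}\leq \prob{\joutage_\setS^{\gamma}}+\prob{\mc{E}_\setS\,|\,\bar{\joutage}_\setS^{\gamma}}.
\end{equation*}
The same derivation that led to (\ref{Eq.Outprob2}) gives $\prob{\joutage_\setS^{\gamma}}\doteq \snr^{-d_\setS(\gamma)}$, while the PEP bound (\ref{p2}) combined with the code condition $\Lambda_{\minant(\setS)}^{\rho|\setS|\mt}(\snr)\dotgeq \snr^{-(\gamma-\epsilon)}$ and the union bound over the $\snr^{Nr(\setS)}$ codeword pairs yields, along the lines of \cite{pcoj07},
\begin{equation*}
\prob{\mc{E}_\setS\,|\,\bar{\joutage}_\setS^{\gamma}}\dotleq \snr^{Nr(\setS)}\expf{-\frac{\snr^{\epsilon/\minant(\setS)}}{4\mt}},
\end{equation*}
which decays exponentially in SNR. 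Hence $\prob{\mc{E}_\setS}\dotleq \snr^{-d_\setS(\gamma)}$.

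Applying this parametrized bound with $\gamma=\gamma_\setS$ allowed by (\ref{Eq.gamma}), and using that $d_\setS$ is decreasing together with $\gamma_\setS\leq r_\setS(d_{\setS^\star}(r(\setS^\star)))$, I obtain $d_\setS(\gamma_\setS)\geq d_{\setS^\star}(r(\setS^\star))$ and therefore $\prob{\mc{E}_\setS}\dotleq \snr^{-d_{\setS^\star}(r(\setS^\star))}$ for every $\setS\neq \setS^\star$. Summing all $2^U-1$ contributions in (\ref{Eq.PeDecomposition}) preserves the exponent, so $P_e(\codebook_{\bm{r}})\dotleq \snr^{-d_{\setS^\star}(r(\setS^\star))}$. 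Together with $P_e(\codebook_{\bm{r}})\dotgeq \prob{\outage}\dotgeq \snr^{-d_{\setS^\star}(r(\setS^\star))}$, this proves (\ref{Eq.ThOptCode}), and taking the supremum over code families shows that no coding scheme can do strictly better, yielding (\ref{Eq.ThOptDMT}).

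The main obstacle I anticipate is the parametrized version of Theorem~\ref{Th.CDC}, because the union-bound step in \cite{pcoj07} was tailored to $\gamma=r(\setS)$; one has to verify that the same exponential-decay mechanism kicks in when the code is designed for the fictitious rate $\gamma_\setS\log\snr$ on the Jensen channel while the actual codebook still carries $\snr^{Nr(\setS)}$ codewords. The fact that $\gamma_\setS\geq r(\setS)$ is precisely what makes the exponent of $\Lambda$ large enough in magnitude to overpower the codebook cardinality inside the exponential. A secondary point to handle carefully is the rank/eigenvalue bookkeeping: the ``nonzero'' convention from Theorem~\ref{Th.CDC} must be preserved so that $\minant(\setS)$ eigenvalues actually contribute to $\Lambda_{\minant(\setS)}^{\rho|\setS|\mt}(\snr)$ for each $\setS$ simultaneously.
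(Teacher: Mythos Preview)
Your proposal is correct and follows essentially the same route as the paper: decompose $P_e$ via \eref{Eq.PeDecomposition}, apply Theorem~\ref{Th.CDC} to $\setS^\star$, use a relaxed version of the same splitting argument for every other $\setS$ to get $\prob{\mc{E}_\setS}\dotleq\snr^{-d_\setS(\gamma_\setS)}$, and then close with the outage lower bound \eref{Eq.LB} and \cite[Lemma~7]{TVZ04}. Your explicit $\gamma$-parametrized Jensen event $\joutage_\setS^{\gamma}$ in fact makes precise what the paper leaves implicit in the phrase ``by the same reasoning as used in the proof of Theorem~\ref{Th.CDC}''; as for your anticipated obstacle, note that the factor $\expf{-\snr^{\epsilon/\minant(\setS)}/(4\mt)}$ dominates the polynomial prefactor $\snr^{Nr(\setS)}$ regardless of how $\gamma_\setS$ compares to $r(\setS)$, so the constraint \eref{Eq.gamma} is used only to control the Jensen-outage term via $d_\setS(\gamma_\setS)\geq d_{\setS^\star}(r(\setS^\star))$, not the union-bound term.
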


\begin{proof}
Using \eref{Eq.PeDecomposition}, we write
\begin{align}\label{tmp1}
P_e(\codebook_{\bm{r}}) = \prob{\mc{E}_{\setS^\star}} + \sum_{\setS\neq\setS^\star} \prob{\mc{E}_{\setS}}.
\end{align}
We bound the terms in the sum on the RHS of \eref{tmp1} separately. By assumption, $\codebook_{\bm{r}}$ satisfies \eref{Eq.ThCDC} for $\setS^\star$ and, hence, it follows from Theorem \ref{Th.CDC} and \eref{Eq.Outprob2} that 
\begin{equation}\label{tmp2}
\prob{\mc{E}_{\setS^\star}} \dotleq \snr^{-d_{\setS^\star}(r(\setS^\star))} \doteq \prob{\joutage_{\setS^\star}}.
\end{equation}
Next, we consider the terms $\prob{\mc{E}_\setS}$ for $\setS\neq \setS^\star$ and use \eref{Eq.UpBoundErrorProb} to write
\begin{align}
\prob{\mc{E}_\setS}&\leq \prob{\joutage_\setS} + \prob{\mc{E}_\setS| \bar{\joutage}_\setS}\notag\\
&\doteq \snr^{-d_\setS(\gamma_\setS)}\label{tmp3}
\end{align}
where \eref{tmp3} is obtained by the same reasoning as used in the proof of Theorem \ref{Th.CDC} with the users' codebooks $\{\mc{C}_{r_u}:u\in\setS\}$ satisfying \eref{Eq.NewCDC} instead of \eref{Eq.ThCDC}. Inserting \eref{tmp2} and \eref{tmp3} into \eref{tmp1} yields
\begin{align}
P_e(\codebook_{\bm{r}}) &\dotleq \snr^{-d_{\setS^\star}(r(\setS^\star))} + \sum_{\setS\neq\setS^\star} \snr^{-d_\setS(\gamma_\setS)}\label{tmp4}\\
&\doteq \snr^{-d_{\setS^\star}(r(\setS^\star))}\label{tmp5}
\end{align}
where \eref{tmp5} follows from the fact that \eref{Eq.gamma} implies $d_\setS(\gamma_\setS) \geq  d_{\setS^\star}(r(\setS^\star))$, for all $\setS\neq \setS^\star$, and consequently, the dominant outage event dominates the upper bound on the total error probability. With $P_e(\codebook_{\bm{r}}) \dotgeq \prob{\outage}$ \cite[Lemma 7]{TVZ04}, combining \eref{Eq.LB} and \eref{tmp5} yields
\begin{equation}
P_e(\codebook_{\bm{r}}) \doteq \prob{\outage} \doteq \snr^{-d_{\setS^\star}(r(\setS^\star))}\label{tee1}.
\end{equation}
Since, by definition, $d(\codebook_{\bm{r}}) \leq d^\star(\bm{r})$, using \eref{Eq.OutLB}, we can finally conclude from \eref{tee1} that
\begin{equation}
d(\codebook_{\bm{r}}) = d^\star(\bm{r})= d_{\setS^\star}(r(\setS^\star)).\label{tee5}
\end{equation}
\end{proof}

As a consequence of Theorem \ref{Th.Proc}, the optimal DM tradeoff is determined by the tradeoff curve $d_{\setS^\star}(r(\setS^\star))$, which is simply the SNR exponent of the Jensen outage probability $\prob{\mathcal{J}_{\setS^\star}}$ corresponding to the dominant outage set. By virtue of \eref{Eq.Outprob2}, \eref{tee1}, and the fact that the relations $\prob{\outage_\setS}\leq\prob{\outage}$ and $\prob{\joutage_\setS}\leq\prob{\outage_\setS}$ hold for every $\setS$ and, a fortiori, for the dominant outage set $\setS^\star$, we get
\begin{equation}
\prob{\mathcal{O}_{\setS^\star}} \dotleq \prob{\mathcal{O}} \doteq \prob{\mathcal{J}_{\setS^\star}} \dotleq \prob{\mathcal{O}_{\setS^\star}} 
\end{equation}
which is to say that
\begin{equation}
\prob{\mathcal{O}_{\setS^\star}} \doteq \snr^{-d_{\setS^\star}(r(\setS^\star))}.
\end{equation}
Hence, as in the point-to-point case \cite{pcoj07}, the Jensen upper bound on mutual information yields a lower bound on the outage probability which is exponentially tight (in SNR).

In order to achieve DM tradeoff optimal performance, the families of codes $\{\codebook_{r_u}, u\in\mc{U}\}$ are required to satisfy \eref{Eq.ThCDC} for the dominant outage set $\setS^\star$ and, in addition, the probability $\prob{\mc{E}_{\setS}}$ corresponding to the sets $\setS\neq\setS^\star$ should decay at least as fast as $\prob{\outage_{\setS^\star}}=\prob{\joutage_{\setS^\star}}$, a requirement that is guaranteed when \eref{Eq.NewCDC} is satisfied for every $\setS\neq\setS^\star$. Note that this code design criterion is less stringent than requiring all the terms $\prob{\mc{E}_\setS}$ to satisfy condition \eref{Eq.ThCDC}, as originally proposed in \cite[Th. 2]{CGB08}. 
%To see this, we note that, by definition of the dominant outage event in \eref{Eq.TEE}, we have $d_\setS(r(\setS)) \geq  d_{\setS^\star}(r(\setS^\star))$ for any $\setS\subseteq\mc{U}$, which upon application of the inverse function $r_\setS()$ to both sides yields
%\begin{equation}
%r(\setS)\leq r_\setS(d_{\setS^\star}(r(\setS^\star)))
%\end{equation}
%which is to say that
%\begin{equation}
%\Lambda_{\minant(\setS)}^{\rho|\setS|\mt}(\snr)\dotgeq \snr^{-r(\setS)}\dotgeq \snr^{-r_\setS(d_{\setS^\star}(r(\setS^\star)))}\notag.
%\end{equation} 
%From a code design perspective, it makes therefore sense to choose $\gamma_\setS = r_\setS(d_{\setS^\star}(r(\setS^\star)))$ in \eref{Eq.ThCDC} since it constitutes the most general condition. 
We conclude by pointing out that the code design criterion in Theorem \ref{Th.Proc} was shown to be necessary and sufficient for DM tradeoff optimality in Rayleigh flat-fading MACs in \cite{AB09}. We stress, however, that there exist codes---at least in the two-user flat-fading case---that satisfy \eref{Eq.ThCDC} in Theorem \ref{Th.CDC} for all $\setS\subseteq\mc{U}$ as we will show in Section \ref{Sec.CodeEx}.

\subsection{Dominant outage event regions}

The following example illustrates the application of Theorem \ref{Th.Proc} to the two-user case, and reveals the existence of multiplexing rate regions dominated by different outage events. Remarkably, although the error mechanism at play here (outage) is different from the one in \cite{Gallager85}, the dominant outage event regions we obtain have a striking resemblance to the dominant error event regions found in \cite{Gallager85}.

%-----------------------------------------------------------------------------%
\subsubsection*{Example}
%-----------------------------------------------------------------------------%
We assume $\mt=3$, $\mr=4$, and $\rank{\mat{R}_\mbb{H}}=\rho=2$. For $U=2$, the $2^2-1 = 3$ possible outage events are denoted by $\mc{O}_1$ (user 1 is in outage), $\mc{O}_2$ (user 2 is in outage) and $\mc{O}_3$ (the channel obtained by concatenating both users' channels into an equivalent point-to-point channel is in outage). The SNR exponents of the corresponding outage probabilities are obtained from \eref{Eq.JensenCurve} as 
\begin{align}
d_u(r_u) &= (3-r_u)(8-r_u), \quad u=1,2,\label{Eq.Ex}\\
d_3(r_1+r_2) &= \big(4-(r_1+r_2)\big)\big(12-(r_1+r_2)\big)\label{Eq.Ex2}.
\end{align}
Based on \eref{Eq.Ex} and \eref{Eq.Ex2}, we can now explicitly determine the dominant outage event for every multiplexing rate tuple $\bm{r}=(r_1, r_2)$. In Fig. \ref{Fig.TEE}, we plot the rate regions dominated by the different outage events. Note that the boundaries $r_1<3$, $r_2<3$, and $r_1+r_2 <4$ are determined by the ergodic capacity region. In the rate region dominated by $\mc{O}_1$, we have $d_1(r_1)<d_2(r_2)$ and $d_1(r_1)<d_3(r_1+r_2)$, implying that the SNR exponent of the total error probability equals $d_1(r_1)$, i.e., the SNR exponent that would be obtained in a point-to-point selective-fading MIMO channel with $\mt=3$, $\mr=4$, and $\rho=2$. The same reasoning applies to the rate region dominated by $\mc{O}_2$ and, hence, we can conclude that, in the sense of the DM tradeoff, the performance in regions $\mc{O}_1$ and $\mc{O}_2$ is not affected by the presence of the respective other user. In contrast, in the area dominated by $\mc{O}_3$, we have $d_3(r_1+r_2)<d_u(r_u)$, $u=1,2$, which is to say that multiuser interference does have an impact on the DM tradeoff and reduces the diversity order that would be obtained if only one user were present. 

\begin{figure}
\begin{center}
\includegraphics[width=.7\textwidth]{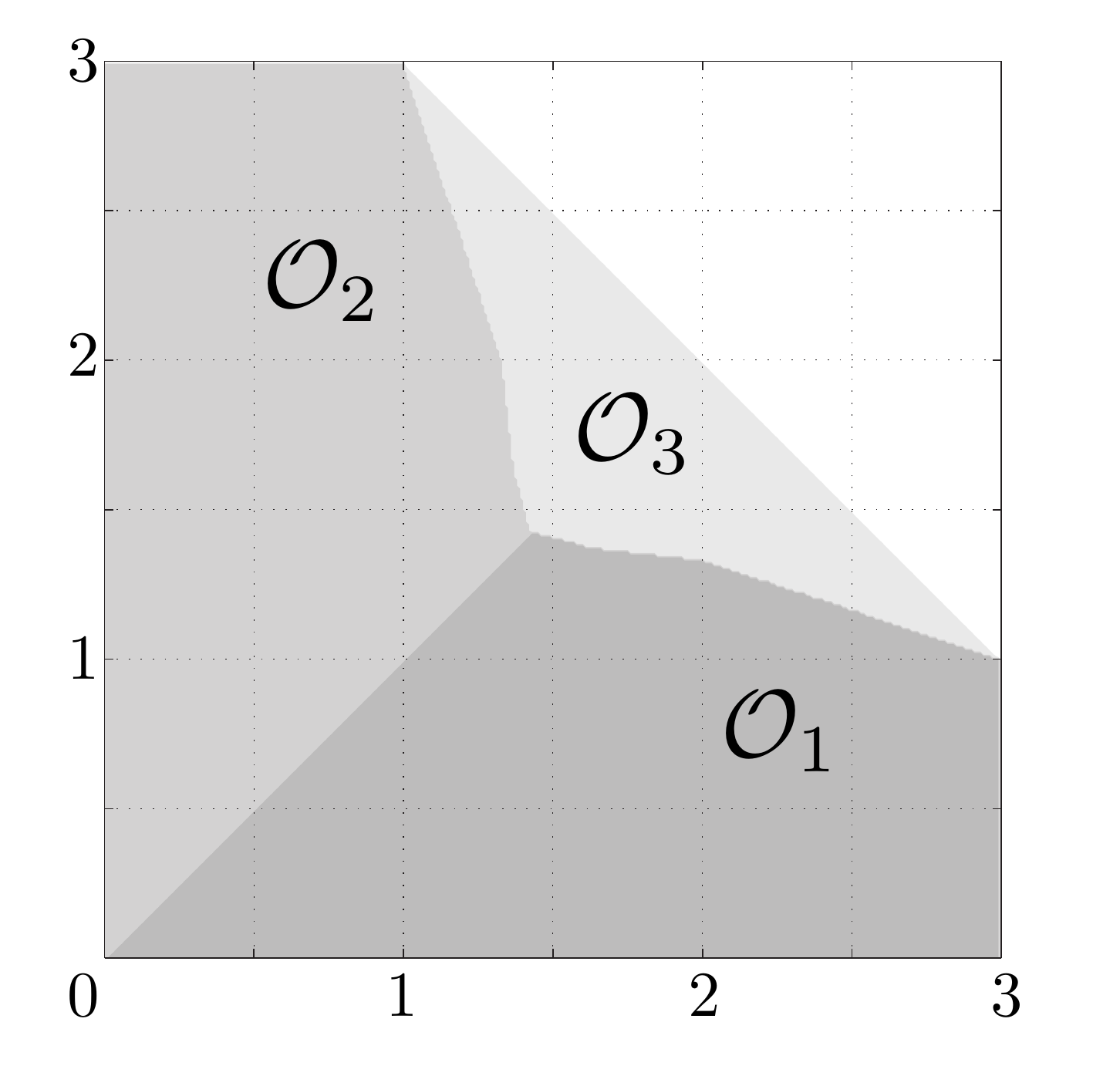}
\caption{\label{Fig.TEE} Dominant outage event regions for a two-user MA MIMO channel with $\mt=3$, $\mr=4$, and $\rho=2$.}
\end{center}
\end{figure}

Fig. \ref{Fig.TEE2} shows the dominant outage event regions for the same system parameters as above but with one additional receive antenna, i.e., $\mr=5$. We observe that not only larger sum multiplexing rates are achievable, i.e., $r_1+r_2 \leq 5$, but also that the area where $\mc{O}_3$ dominates the total error probability, and hence where multiuser interference reduces the achievable diversity order, is significantly smaller relative to the area dominated by the single user outage events $\mc{O}_1$ and $\mc{O}_2$. This effect can be attributed to the fact that increasing $\mr$ yields more spatial degrees of freedom at the receiver and, consequently, alleviates the task of resolving multiuser interference.

\begin{figure}
\begin{center}
\includegraphics[width=.7\textwidth]{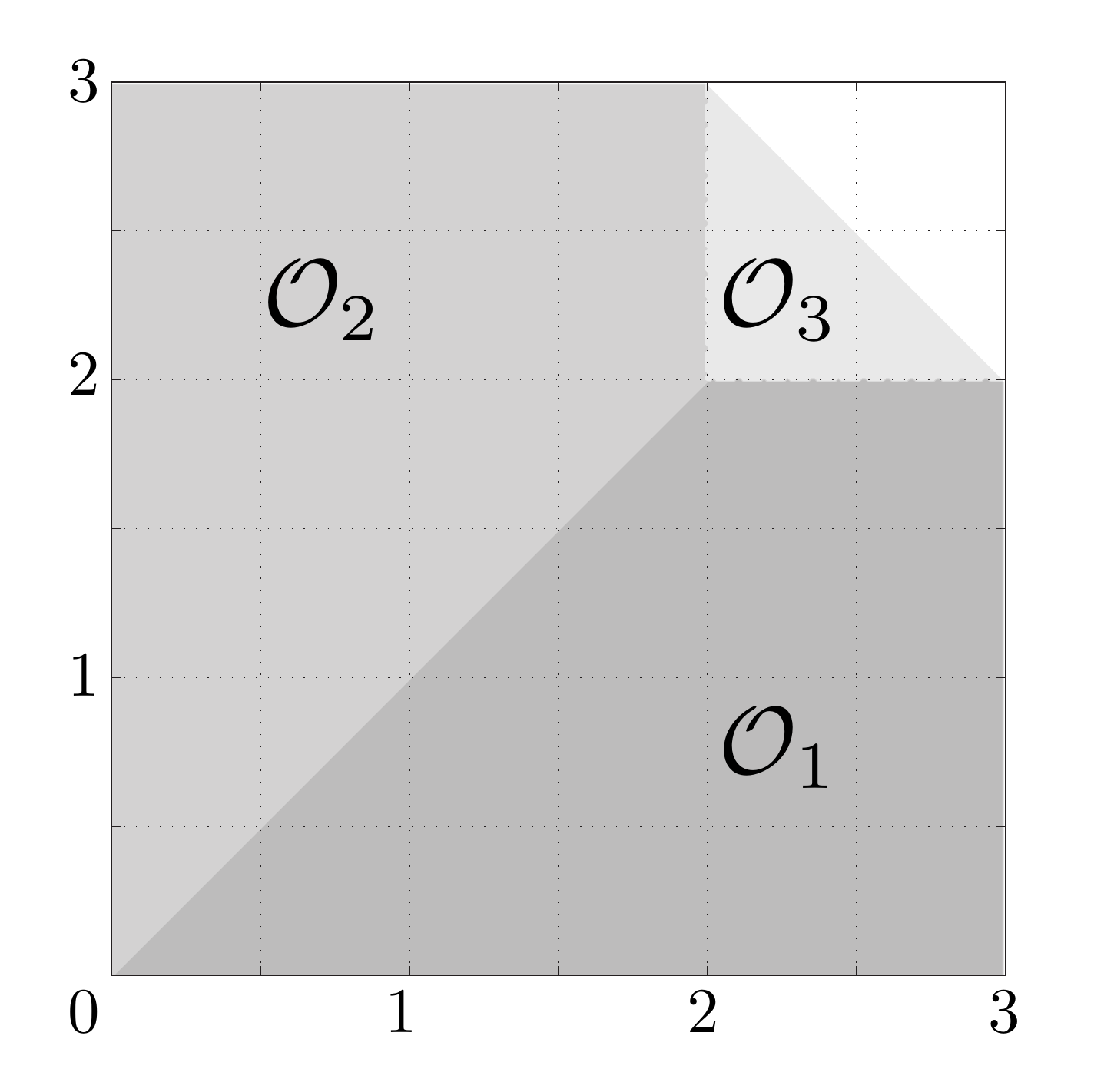}
\caption{\label{Fig.TEE2} Dominant outage event regions for a two-user MA MIMO channel with $\mt=3$, $\mr=5$, and $\rho=2$.}
\end{center}
\end{figure}

\subsection{Multiplexing rate region at a given diversity level}

The dominant outage event determines the maximum achievable diversity order as a function of the multiplexing rate tuple $\bm{r}$. Conversely, one may also be interested in finding the region $\mc{R}(d)$ of achievable multiplexing rates at a minimum diversity order $d\in[0,\rho \mt\mr]$ associated with the total error probability. This is accomplished by designing an overall family of codes that is DM tradeoff optimal and satisfies
\begin{equation}\label{cmp1}
d_{\setS}(r(\setS))\geq d,  \quad \forall \setS\subseteq\mc{U}
\end{equation}
which upon application of $r_{\scriptscriptstyle\setS}(\cdot)$ to both sides is found to be equivalent to
\begin{equation*}
r(\setS) \leq r_{\scriptscriptstyle\setS}(d), \quad \forall \setS\subseteq\mc{U}.
\end{equation*}
We just proved the following extension of \cite[Th. 2]{TVZ04} to selective-fading MA MIMO channels.

\begin{cor}\label{Cor.MRR}
Consider an overall family of codes $\codebook_{\bm{r}}$ that achieves the optimal DM tradeoff in the sense of Theorem \ref{Th.Proc}. Then, the region of multiplexing rates for which the total error probability decays with SNR exponent at least equal to $d$ is characterized by
\begin{equation}
\mc{R}(d) \triangleq \bigg\{\bm{r}:  r(\setS) \leq r_{\scriptscriptstyle\setS}(d), \forall \setS\subseteq\mc{U}\bigg\}\label{eq:MRR}
\end{equation}
where $r_{\scriptscriptstyle\setS}(d)$ is the inverse function of $d_\setS(r)$.
\end{cor}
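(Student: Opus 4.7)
The plan is to unpack the corollary directly from Theorem~\ref{Th.Proc} and the monotonicity of $d_\setS(r)$, essentially formalizing the one-line argument preceding the statement. The requirement that the total error probability decays with SNR exponent at least $d$ is precisely $d(\codebook_{\bm{r}}) \geq d$, and since $\codebook_{\bm{r}}$ is assumed to achieve the optimal DM tradeoff, this is equivalent to $d^\star(\bm{r}) \geq d$.

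Next, I would invoke Theorem~\ref{Th.Proc}, which gives $d^\star(\bm{r}) = d_{\setS^\star}(r(\setS^\star)) = \min_{\setS \subseteq \mc{U}} d_\setS(r(\setS))$. Therefore the condition $d^\star(\bm{r}) \geq d$ is equivalent to \eref{cmp1}, i.e., $d_\setS(r(\setS)) \geq d$ for every $\setS \subseteq \mc{U}$. This is the hinge step: it converts a statement about the single dominant set $\setS^\star$ (which itself depends on $\bm{r}$) into a universally quantified condition over all subsets, which is the form appearing in \eref{eq:MRR}.

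Finally, I would apply the inverse function $r_{\scriptscriptstyle\setS}(\cdot)$ to each of these inequalities. Since $d_\setS(r)$ is strictly decreasing in $r$ on the relevant interval $r\in[0,\minant(\setS)]$ (as remarked before the theorem), its inverse $r_{\scriptscriptstyle\setS}(d)$ is likewise decreasing, and applying $r_{\scriptscriptstyle\setS}(\cdot)$ to both sides of $d_\setS(r(\setS)) \geq d$ yields $r(\setS) \leq r_{\scriptscriptstyle\setS}(d)$. Taking the intersection of these conditions over all $\setS \subseteq \mc{U}$ gives exactly the region $\mc{R}(d)$ in \eref{eq:MRR}. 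No serious obstacle is expected: the only mild subtlety is to verify that the arguments $r(\setS)$ always lie in the domain where $d_\setS$ is invertible (which follows from the implicit assumption that $\bm{r}$ lies within the ergodic capacity region, restricting $r(\setS) \le \minant(\setS)$) so that the inverse is well-defined and the equivalence is genuine.
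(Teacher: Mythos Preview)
Your proposal is correct and follows precisely the paper's own argument: the paper reduces $d^\star(\bm{r})\geq d$ to the family of inequalities $d_\setS(r(\setS))\geq d$ in \eref{cmp1} and then inverts each via the decreasing $r_{\scriptscriptstyle\setS}(\cdot)$ to obtain $r(\setS)\leq r_{\scriptscriptstyle\setS}(d)$. Your added remark on the domain of invertibility (via the ergodic-capacity constraint $r(\setS)\leq\minant(\setS)$) makes explicit a point the paper leaves implicit.
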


To illustrate the concept of a multiplexing rate region \cite{TVZ04}, consider the two-user case with $\mt=3$, $\mr=4$, and $\rho=2$. Fig. \ref{Fig.mrr} shows the multiplexing rate regions $\mc{R}(d)$ corresponding to several diversity order levels, i.e., $d \in \{0,2,4,8,16\}$. The region $\mc{R}(0)$ is the pentagon described by the constraints $r_1\leq 3$, $r_2\leq 3$, and $r_1+r_2\leq \min(2\mt,\mr)=4$. Higher diversity order can be achieved at the expense of tighter constraints on the achievable multiplexing rates $r_1$ and $r_2$. For instance, for a diversity order requirement of $d\geq 8$, the achievable multiplexing rate region is given by the pentagon $0\mathrm{ABCD}$. Increasing the minimum required diversity order results in multiplexing rate regions that shrink towards the origin. Note that to realize a diversity order requirement of $d\geq 16$, the allowed multiplexing rate region is a square; in this case, performance (in the sense of the DM tradeoff) is not affected by the presence of a second user. Intuitively, the required diversity order is so high that users can only communicate at very small multiplexing rates and multiuser interference does not dominate the total error probability.

%%%%%%%%%%%
\begin{figure}
\begin{center}
\includegraphics[width=.7\textwidth]{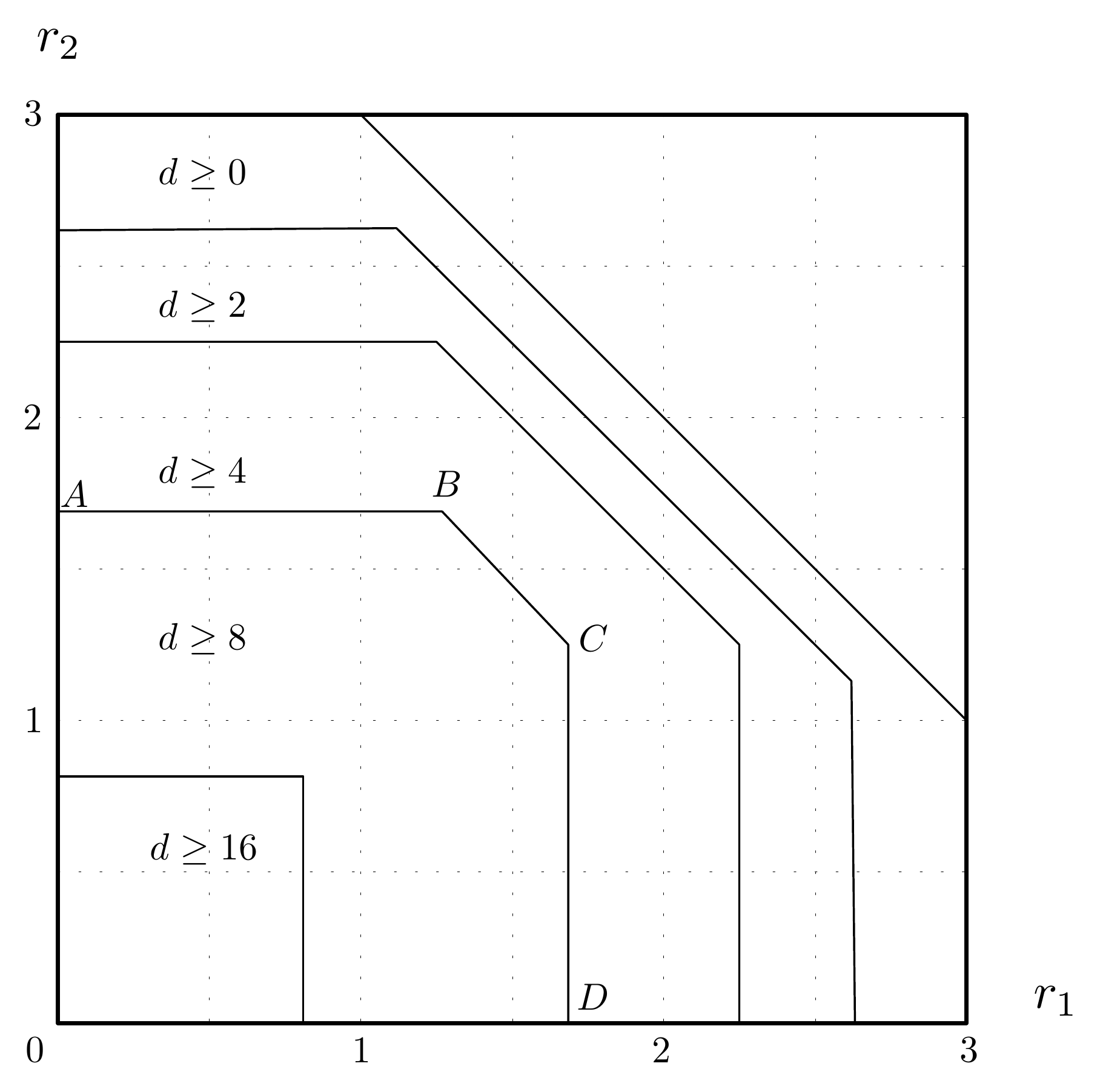}
\caption{\label{Fig.mrr} Multiplexing rate regions as a function of the diversity order $d\in\sizecurly{0,2,4,8,16}$ corresponding to the total error probability ($\mt=3$, $\mr=4$, and $\rho=2$).}
\end{center}
\end{figure}
%%%%%%%%%%%

%-----------------------------------------------------------------------------%
% SECTION: CODE CONSTRUCTION ANALYSIS                  %
%-----------------------------------------------------------------------------%
%-----------------------------------------------------------------------------%
\section{Analysis of a code construction for the two-user flat-fading case}\label{Sec.CodeEx}
%-----------------------------------------------------------------------------%

In this section, we study the algebraic code construction proposed recently in \cite{BadBel08} for flat-fading MACs with 
two single-antenna users and an arbitrary number of antennas at the receiver. We examine whether this code satisfies the 
code design criteria of Theorem \ref{Th.Proc} and focus on the case of a two-antenna receiver, for simplicity.
%\footnote{In \cite{BadBel08}, DM tradeoff optimality of the proposed code is shown for $r_1=r_2$.}. \looseness-1

We start by briefly reviewing the code construction described in \cite{BadBel08} for a system with $\mt=1$, $\mr=2$, $U=2$, $N=2$, and $\rho=1$ (i.e., flat fading). For each user $u$, let $\mc{A}_u$ denote a QAM constellation with $2^{R_u'(\snr)}$ points carved from $\mbb{Z}[i]=\{k+il:k,l\in\mbb{Z}\}$, where $i=\sqrt{-1}$ and $R_u'(\snr)=(r_u-\epsilon)\log\snr$ for some $\epsilon>0$, i.e.,
\begin{equation}\label{constellation}
\mc{A}_u = \sizecurly{(k+il): \frac{-2^{R_u'(\snr)/{2}}}{{2}}\leq k, l\leq\frac{2^{R_u'(\snr)/2}}{{2}}, \;k, l\in\mbb{Z}}.
\end{equation}
The proposed code spans two slots so that the vector of information symbols corresponding to user $u$ is given by $\mat{s}_u = [s_{u,1}\; s_{u,2}]$, where $s_{u,1}, s_{u,2} \in \mc{A}_u$. The vector $\mat{s}_u$ is then encoded using the unitary transformation matrix $\mat{U}$ underlying the Golden Code \cite{BelRekVit05} according to\looseness-1
\begin{equation}\label{Eq.GoldenTransform}
\tilde{\mat{x}}_u^T = \mat{U}\:\mat{s}_u^T = \begin{bmatrix}{x}_u \\ \sigma({x}_u) \end{bmatrix}\; \text{with }\mat{U}=\frac{1}{\sqrt{5}}\begin{bmatrix} \alpha & \alpha\varphi \\ \bar{\alpha} & \bar{\alpha}\bar{\varphi}\end{bmatrix}
\end{equation}
where $\varphi=\frac{1+\sqrt{5}}{2}$ denotes the Golden number with corresponding conjugate $\bar{\varphi}=\frac{1-\sqrt{5}}{2}$, $\alpha=1+i-i\varphi$ and $\bar{\alpha}=1+i-i\bar{\varphi}$. By construction, $x_u$ belongs to the quadratic extension $\mbb{Q}(i,\sqrt{5})$ over $\mbb{Q}(i)=\sizecurly{k+il : k,l\in \mbb{Q}}$. Here, $\sigma$ denotes the generator of the Galois group of $\mbb{Q}(i,\sqrt{5})$ given by\looseness-1
\begin{equation}\label{Eq.Gen}
\begin{array}{cccc} \sigma: & \mbb{Q}(i,\sqrt{5}) & \rightarrow & \mbb{Q}(i,\sqrt{5})\\
& a+ b \sqrt{5} & \mapsto & a -b\sqrt{5}.
\end{array}
\end{equation}
Moreover, one of the users, say user 2, multiplies the symbol transmitted in the first slot by a constant $\gamma\in \mbb{Q}(i)$, resulting in the transmit codeword
\begin{equation}\label{Eq.Codeword}
\tilde{\mat{X}} = \begin{bmatrix}{x}_{1}& \sigma({x}_{1}) \\ \gamma{x}_{2} & \sigma({x}_2) \end{bmatrix}.
\end{equation}

Depending on the choice of the parameter $\gamma$, the codeword difference matrices arising from this construction have a nonzero determinant. For completeness, we shall next provide a proof of this statement which was originally made in \cite{BadBel08}.

\begin{tc}\label{Th3}
For any $\gamma \neq \pm 1$ and any two $\tilde{\mat{X}}, \tilde{\mat{X}}'$ according to \eref{Eq.Codeword}, it holds that $\det(\mat{\Delta})\neq 0$, where $\mat{\Delta}=\tilde{\mat{X}}-\tilde{\mat{X}}'$.\looseness-1
%For a system with $\mt=1$, $\mr=2$, $U=2$, $N=2$, and $\rho=1$ (i.e., flat fading), the algebraic code construction in \cite{BadBel08} satisfies \eref{Eq.ThCDC} for all $\setS\subseteq \mc{U}$ and any multiplexing rate tuple $\bm{r}$, and is therefore DM tradeoff optimal.
\end{tc}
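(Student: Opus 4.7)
The plan is a short Galois-invariance argument exploiting the fact that $\gamma\in\mathbb{Q}(i)$ lies in the fixed field of the generator $\sigma$ of $\mathrm{Gal}(\mathbb{Q}(i,\sqrt{5})/\mathbb{Q}(i))$; that is, $\sigma(\gamma)=\gamma$. Combined with $\sigma^{2}=\mathrm{id}$, this single algebraic fact is enough to corner $\gamma$ into $\{\pm 1\}$ whenever $\det(\mat{\Delta})$ vanishes, so the theorem follows by a direct contrapositive.

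Concretely, I would first use the $\mathbb{Q}(i)$-linearity of $\sigma$ on the code structure in \eref{Eq.Codeword} to write
\[
\mat{\Delta} = \begin{bmatrix} \Delta x_1 & \sigma(\Delta x_1) \\ \gamma\,\Delta x_2 & \sigma(\Delta x_2) \end{bmatrix}, \qquad \Delta x_u \triangleq x_u - x_u' \in \mathbb{Q}(i,\sqrt{5}),
\]
giving the single-line expansion
\[
\det(\mat{\Delta}) = \Delta x_1\,\sigma(\Delta x_2) - \gamma\,\Delta x_2\,\sigma(\Delta x_1).
\]
I would then argue by contradiction: suppose $\det(\mat{\Delta})=0$, so that $\Delta x_1\,\sigma(\Delta x_2)=\gamma\,\Delta x_2\,\sigma(\Delta x_1)$. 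Applying the involution $\sigma$ to both sides, and using $\sigma(\gamma)=\gamma$ together with $\sigma^{2}=\mathrm{id}$, produces the companion identity $\sigma(\Delta x_1)\,\Delta x_2 = \gamma\,\sigma(\Delta x_2)\,\Delta x_1$, i.e., $\Delta x_2\,\sigma(\Delta x_1)=\gamma\,\Delta x_1\,\sigma(\Delta x_2)$. Substituting this companion relation into the original one forces $\Delta x_1\,\sigma(\Delta x_2) = \gamma^{2}\,\Delta x_1\,\sigma(\Delta x_2)$, from which $\gamma^{2}=1$ and hence $\gamma=\pm 1$, contradicting the hypothesis.

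The main delicate point is the cancellation extracting $\gamma^{2}=1$, which requires $\Delta x_1\,\sigma(\Delta x_2)\neq 0$; since $\sigma$ is a field automorphism, this is equivalent to $\Delta x_1 \Delta x_2\neq 0$, precisely the joint-error configuration in which the rank criterion of Theorem \ref{Th.CDC} demands full rank of $\mat{\Delta}$ for the dominant set $\setS=\{1,2\}$ in the two-user flat-fading setup treated here. Notably, the entire argument uses nothing about the extension beyond $\sigma^{2}=\mathrm{id}$ and $\sigma(\gamma)=\gamma$, so it is independent of the explicit form of $\varphi$, $\alpha$, and the particular quadratic extension chosen for the Golden code, and it transparently recovers the earlier ad hoc computations of \cite{BadBel08} as a special case.
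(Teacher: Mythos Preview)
Your argument is correct and, at its core, exploits the same two facts the paper uses: $\sigma$ is an involution and $\sigma(\gamma)=\gamma$ since $\gamma\in\mathbb{Q}(i)$. The route is slightly different, however. The paper first reduces to codewords (rather than differences) via the $\mathbb{Q}(i)$-linearity of $\sigma$, sets $x=x_1\sigma(x_2)$, rewrites $\det(\tilde{\mat{X}})=x-\gamma\sigma(x)$, and then argues concretely: if the determinant vanishes, $\gamma=x/\sigma(x)$, and the requirement $\gamma\in\mathbb{Q}(i)$ forces $x\in\mathbb{Q}(i)$ or $x\in\sqrt{5}\,\mathbb{Q}(i)$ by inspecting the $\{1,\sqrt{5}\}$-coordinates, yielding $\gamma=\pm1$. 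You sidestep this coordinate computation by applying $\sigma$ to the vanishing relation itself and substituting back to obtain $\gamma^{2}=1$ directly. Your version is therefore a cleaner, basis-free Galois argument that---as you note---generalizes verbatim to any degree-two extension with $\gamma$ in the fixed field; the paper's version is tied to the explicit form of $\mathbb{Q}(i,\sqrt{5})$. You are also more careful than the paper about the degenerate case $\Delta x_1\Delta x_2=0$, correctly identifying it as a single-user error outside the scope of the $\setS=\{1,2\}$ criterion; the paper's statement and proof tacitly exclude this case without comment.
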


\begin{proof}
Proceeding along the same lines as \cite{BadBel08}, we start by proving that the determinant corresponding to any codeword $\tilde{\mat{X}}$ in \eref{Eq.Codeword} is nonzero for any $\gamma\neq\pm1$, and hence, by the linearity of the mapping $\sigma$ over $\mbb{Q}(i,\sqrt{5})$, the determinant of any codeword difference matrix is also nonzero. Note that 
\begin{eqnarray}
\det(\tilde{\mat{X}}) &= x_1\sigma(x_2)-\gamma x_2\sigma(x_1)\notag\\
&= x-\gamma \sigma(x) \label{rn}
\end{eqnarray}
where the last step follows from setting $x=x_1\sigma(x_2)$, noting that $\sigma(\sigma(x))=x$ for any $x\in\mbb{Q}(i,\sqrt{5})$, and using the property $\sigma(x\cdot y) = \sigma(x)\cdot\sigma(y)$ for every $x, y\in\mbb{Q}(i,\sqrt{5})$. Hence, $\det(\tilde{\mat{X}})$ is zero if and only if $\gamma$ satisfies $\gamma = x/\sigma(x)$. In this case, recalling that $\gamma \in \mbb{Q}(i)$, we must have  $x \in \mbb{Q}(i)$, or $x \in \sqrt{5}\mbb{Q}(i)=\sizecurly{\sqrt{5}(k+il) : k,l\in \mbb{Q}}$. These constraints yield, respectively, $\gamma=x/\sigma(x)= 1$ and $\gamma=x/\sigma(x) = -1$, from which we can infer that $\det(\tilde{\mat{X}})=0$ $\iff$ $\gamma=\pm 1$. Hence, any $\gamma\in\mbb{Q}(i)\backslash\{\pm1\}$ guarantees $\det(\tilde{\mat{X}})\neq 0$ for $x_1, x_2 \in \mbb{Q}(i,\sqrt{5})$.
\end{proof}

We are now ready to examine whether this construction satisfies the code design criteria for DM tradeoff optimality given in Theorem \ref{Th.Proc}. For simplicity, we assume $\gamma=i$ in the following.

We start by considering the cases $\setS=\{1\}$ and $\setS=\{2\}$. Assume that $\mc{A}_u$ is chosen according to \eref{constellation}. By \eref{Eq.GoldenTransform} and the fact that $\mat{U}$ is unitary, we obtain
\begin{align}
\max_{\begin{subarray}{c}\tilde{\mat{x}}_u\: : \: \tilde{\mat{x}}_u \: = \: {\mat{s}}_u\mat{U}^T \\ {s}_{u,1}, s_{u,2} \:\in\: \mc{A}_u\end{subarray}} \norm{\tilde{\mat{x}}_u}^2 &= \max_{{s}_{u,1}, s_{u,2} \:\in\: \mc{A}_u}\norm{{\mat{s}}_u}^2\notag\\
&=2\:\sizeparentheses{\frac{2^{R_u'(\snr)}}{2}}\label{cmp6}
\end{align}
for $u=1,2$. In order to satisfy the power constraint \eref{Eq.PC}, we scale the transmit vector corresponding to user $u$ as 
\begin{equation}\label{Eq.scaling}
\mat{x}_u =\sizeparentheses{\frac{2^{R_u'(\snr)}}{2}}^{-1/2} \:\tilde{\mat{x}}_u
\end{equation} 
so that, using \eref{cmp6}, we get
\begin{align}
\max_{\mat{x}_u \:\in\: \codebook_{r_u}(\snr)}\norm{\mat{x}_u}^2 &= \sizeparentheses{\frac{2^{R_u'(\snr)}}{2}}^{-1} \max_{\begin{subarray}{c}\tilde{\mat{x}}_u\: : \: \tilde{\mat{x}}_u \: = \: {\mat{s}}_u\mat{U}^T \\ {s}_{u,1}, s_{u,2} \:\in\: \mc{A}_u\end{subarray}} \norm{\tilde{\mat{x}}_u}^2\notag\\
&= 2.\label{cmp2}
\end{align}
For user 2, we note that \eref{cmp2} remains valid after multiplying the first entry of $\mat{x}_2$ by $\gamma=i$. Next, we note that in the flat-fading case $\mat{R}_\mbb{H}^T\odot(\mat{e}_u^H\mat{e}_u)= \mat{e}_u^H\mat{e}_u$, where $\mat{e}_u=\mat{x}_u-\mat{x}_u'$ for $\mat{x}_u,\mat{x}_u' \in \codebook_{r_u}(\snr)$ and $u=1,2$. Considering user 1, i.e., $\setS=\{1\}$, we have $|\setS|=1$ and $\minant(\setS)=1$ so that the quantity defined in \eref{Eq.DefLambda} is simply the smallest squared norm of the first row in \eref{Eq.MatE} and satisfies
\begin{align}
\Lambda_1^{1}(\snr) &= \min_{\begin{subarray}{c} \mat{x}_1,\mat{x}_1' \: \in\: \codebook_{r_1}(\snr)\end{subarray}} \quad \norm{\mat{x}_1-\mat{x}_1'}^2\notag\\
&= 2^{1-R_1'(\snr)}\:\min_{\begin{subarray}{c}\tilde{\mat{x}}_1\: : \: \tilde{\mat{x}}_1 \: = \: {\mat{s}}_1\mat{U}^T\:;\: \tilde{\mat{x}}_1'\: : \: \tilde{\mat{x}}_1' \: = \: {\mat{s}}_1'\mat{U}^T\\{s}_{1,1}, \: s_{1,2}, \:{s}_{1,1}', \: s_{1,2}'\:\in\: \mc{A}_1\end{subarray}} \quad \norm{\tilde{\mat{x}}_1-\tilde{\mat{x}}_1'}^2\label{cmp4}\\
&= 2^{1-R_1'(\snr)}\: \min_{{s}_{1,1}, \: s_{1,2}, \:{s}_{1,1}', \: s_{1,2}'\:\in\: \mc{A}_1}  \underbrace{||\mat{s}_1-\mat{s}_1'||^2}_{\geq \:d_{\min}^2}\label{cmp5}
\end{align}
where \eref{cmp4} follows from \eref{Eq.scaling}, and \eref{cmp5} is a consequence of $\tilde{\mat{x}}_1^T = \mat{U}\mat{s}_1^T$ and the unitarity of $\mat{U}$. From \eref{constellation}, we note that $d_{\min}=1$, i.e., the minimum distance in $\mc{A}_1$ is independent of SNR, and invoking $R_1'(\snr)=(r_1-\epsilon) \log\snr$, we can conclude from \eref{cmp5} that 
\begin{equation*}
\Lambda_1^{1}(\snr)\doteq \snr^{-(r_1-\epsilon)}. 
\end{equation*}
For user 2, a similar argument\footnote{The multiplication of the first component of $\tilde{\mat{x}}_2$ by $\gamma=i$ does not affect the Euclidean norm.} shows that $\Lambda_1^{1}(\snr)\doteq \snr^{-(r_2-\epsilon)}$ and, hence, the code satisfies the criteria arising from \eref{Eq.ThCDC} for $\setS=\{1\}$ and $\setS=\{2\}$. 
%We note that this result holds for any tuple $(r_1, r_2)$ in the multiplexing rate region, implying that this construction is optimal with respect to single-user criteria irrespectively of the dominant outage set $\setS^\star$.

Next, we consider the case $\setS=\{1,2\}$. The overall transmit codeword is now given by
\begin{equation}\label{Eq.TxCodeword}
\mat{X}= \sqrt{2} \begin{bmatrix} 2^{-R_1'(\snr)/2} \:{x}_{1}& 2^{-R_1'(\snr)/2} \:\sigma({x}_{1}) \\ 2^{-R_2'(\snr)/2} \:i{x}_{2} & 2^{-R_2'(\snr)/2} \:\sigma({x}_2) \end{bmatrix}
\end{equation}
and satisfies the power constraint \eref{Eq.PC}, i.e.,
\begin{align*}
\max_{\mat{X}\:\in\:\codebook_{\bm{r}}(\snr)}\norm{\mat{X}}_{\mathrm{F}}^2 &= \max_{\mat{X}\:\in\:\codebook_{\bm{r}}(\snr)}\trace{\mat{XX}^H}\\
&=\max_{\mat{x}_1 \:\in\: \codebook_{r_1}(\snr)}\norm{\mat{x}_1}^2+\max_{\mat{x}_2 \:\in\: \codebook_{r_2}(\snr)}\norm{\mat{x}_2}^2\\
&= 4.
\end{align*}
From \eref{Eq.TxCodeword} and the linearity of the mapping $\sigma$ over $\mbb{Q}(i,\sqrt{5})$, the codeword difference matrix is obtained as\looseness-1
\begin{equation}\label{Eq.MatE}
\mat{E}= \sqrt{2} \begin{bmatrix} 2^{-R_1'(\snr)/2} \:{e}_{1}& 2^{-R_1'(\snr)/2} \:\sigma({e}_{1}) \\ 2^{-R_2'(\snr)/2} \:i {e}_{2} & 2^{-R_2'(\snr)/2} \:\sigma({e}_2) \end{bmatrix}
\end{equation}
where $e_u = x_u-x_u'$ and hence $e_u \in \mbb{Q}(i,\sqrt{5})$, $u=1,2$. Recall that in the flat-fading case $\mat{R}_\mbb{H}^T\odot(\mat{E}^H\mat{E})= \mat{E}^H\mat{E}$. Next, note that $|\setS|=2$ and $\minant(\setS)=2$ so that $\Lambda_2^{2}(\snr)=\min_\mat{E}\abs{\det(\mat{E})}^2$. From \eref{Eq.MatE}, we readily get 
\begin{equation}\label{new1}
\min_{\begin{subarray}{c}\mat{E}={\mat{X}}-{\mat{X}}'\\ \mat{X},\mat{X}'\:\in\:\codebook_{\bm{r}}(\snr)\end{subarray}}\abs{\det(\mat{E})}^2 =2^{1-(R_1'(\snr)+R_2'(\snr))} \:\min_{\begin{subarray}{c}\mat{\Delta}=\tilde{\mat{X}}-\tilde{\mat{X}}'\\ \tilde{\mat{X}} = f(\mat{s}_1, \mat{s}_2)\:,\:\tilde{\mat{X}}' = f(\mat{s}_1', \mat{s}_2')\\ s_{u,1}, s_{u,2}, s_{u,1}', s_{u,2}' \:\in\: \mc{A}_u,\: u=1,2\end{subarray}}\abs{\det(\mat{\Delta})}^2
\end{equation} 
where we have used the notation $\tilde{\mat{X}} = f(\mat{s}_1, \mat{s}_2)$ to express the fact that $\tilde{\mat{X}}$ is obtained from $\mat{s}_1$ and $\mat{s}_2$ using \eref{Eq.GoldenTransform} and \eref{Eq.Codeword}. We recall that $\det(\mat{\Delta})\neq 0$ for $\mat{\Delta}$ arising from any combination of vectors $\mat{s}_u, \mat{s}_u'$ ($u=1,2$) with entries in $\mbb{Z}[i]$. Therefore, for every SNR, there must exist an $\omega(\snr)>0$ such that 
\begin{equation}\label{new}
\min_{\begin{subarray}{c}\mat{\Delta}=\tilde{\mat{X}}-\tilde{\mat{X}}'\\ \tilde{\mat{X}} = f(\mat{s}_1, \mat{s}_2)\:,\:\tilde{\mat{X}}' = f(\mat{s}_1', \mat{s}_2')\\ s_{u,1}, s_{u,2}, s_{u,1}', s_{u,2}' \:\in\: \mc{A}_u,\: u=1,2\end{subarray}}\abs{\det(\mat{\Delta})}^2 = \omega(\snr)
\end{equation}
which, upon inserting into \eref{new1} and using $R_u'(\snr)=(r_u-\epsilon) \log\snr$ ($u=1,2$), yields
\begin{align}
\Lambda_2^{2}(\snr)&=\min_{\begin{subarray}{c}\mat{E}={\mat{X}}-{\mat{X}}'\\ \mat{X},\mat{X}'\:\in\:\codebook_{\bm{r}}(\snr)\end{subarray}}\abs{\det(\mat{E})}^2 \notag\\
&\doteq  \snr^{-(r_1+r_2-2\epsilon)} \:\omega(\snr).\label{new2}
\end{align}
It follows from (\ref{new})\,---by inspection---\,that $\omega(\snr)$ is a nonincreasing function of $\snr$. Unfortunately, Theorem~\ref{Th3} does not allow us to conclude that $\omega(\snr)$ is bounded away from zero, in which
case we could conclude from \eref{new2} and \eref{Eq.ThCDC} that the code is DMT-optimal for all multiplexing rate tuples. Therefore, characterizing the decay of $\omega(\snr)$ as a function of SNR is key to proving or disproving 
the DM tradeoff optimality of the code construction. Unfortunately, we have not been able to determine how $\omega(\snr)$ decays with SNR\footnote{We would like to use this chance to point out that despite the claim we made in \cite{CGB08}, we do not have a proof establishing the DM tradeoff optimality of the code construction in \cite{BadBel08}.}. Characterizing this decay rate seems very difficult and is likely to require advanced algebraic concepts. We can, however, distinguish between three different possibilities. If $\omega(\snr)$ decays exponentially with SNR, the criteria for DM tradeoff optimality provided in this paper are not met. In the case of a subpolynomial decay, i.e., 
\begin{equation*}
\lim_{\snr\rightarrow\infty} \frac{\log \omega(\snr)}{\log\snr}=0
\end{equation*}
we would get $\Lambda_2^{2}(\snr) \doteq \snr^{-(r_1+r_2-2\epsilon)}$ and, hence, such a decay would be sufficient to guarantee that \eref{new2} satisfies the code design criterion \eref{Eq.ThCDC} for $\setS=\{1,2\}$ and any tuple $(r_1,r_2)$ in the multiplexing rate region. Finally, we consider the case of $\omega(\snr)$ exhibiting polynomial decay, assuming that $\omega(\snr) \doteq \snr^{-\delta}$, $\delta>0$. In this case, it would follow from \eref{new2} that
\begin{equation*}
\Lambda_2^{2}(\snr) \doteq\snr^{-(r_1+r_2+\delta-2\epsilon)}.
\end{equation*}
The quantity $\Lambda_2^{2}(\snr)$ would then decay faster than required by \eref{Eq.ThCDC}. In other words, the code construction would not be DM tradeoff optimal in the sense of Theorem \ref{Th.Proc} when the dominant outage set is $\setS^\star=\{1,2\}$. However, when the dominant outage set is either $\setS^\star=\{1\}$ or $\setS^\star=\{2\}$, the relaxed (compared to the code design criteria proposed in \cite{CGB08}) code design criteria provided in \eref{Eq.NewCDC} would still be met for any multiplexing rate tuple $(r_1, r_2)$ satisfying
\begin{equation*}
r_1+r_2+\delta \leq r_\setS(d_{\setS^\star}(r(\setS^\star))).
\end{equation*}
%Unfortunately, determining the set of rate tuples $(r_1, r_2)$ for which the construction satisfies this inequality requires explicit knowledge of the decay of $\omega(\snr)$ characterized by $\delta$. 

%The code construction discussed above guarantees that $\prob{\mc{E}_\setS}\dotleq\prob{\joutage_\setS}$ for all $\setS\subseteq\mc{U}$, which is to say that each term $\prob{\mc{E}_\setS}$ has the fastest possible decay in SNR. While this is sufficient to achieve DM tradeoff optimality as pointed out in \cite{CGB08}, it is not necessary. 

%In fact, Theorem \ref{Th.Proc} relaxes this condition by requiring $\prob{\mc{E}_{\setS^\star}}\doteq\prob{\joutage_{\setS^\star}}$ and $\prob{\mc{E}_\setS}\dotleq\prob{\joutage_{\setS^\star}}$ for $\setS\neq\setS^\star$, allowing that the terms $\prob{\mc{E}_\setS}$ for $\setS\neq\setS^\star$ decay only at the same rate as $\prob{\mc{E}_{\setS^\star}}$, i.e., the slowest decay rate among all $\setS\subseteq\mc{U}$. 

We conclude this section by noting that a DM tradeoff optimal code construction for flat-fading MACs was reported in \cite{NamGam07}. Specifically, it is shown in \cite{NamGam07} that lattice-based space-time codes achieve the optimal DM tradeoff with lattice decoding. As a consequence of the code design criterion in Theorem \ref{Th.Proc} being necessary and sufficient for DM tradeoff optimality 
in Rayleigh flat-fading MACs \cite{AB09}, the code construction reported in \cite{NamGam07} necessarily satisfies these design criteria. The systematic construction of DM tradeoff optimal codes for selective-fading MA MIMO channels seems, however, largely unexplored.

%-----------------------------------------------------------------------------%
% SECTION: CONCLUSION                                                      %
%-----------------------------------------------------------------------------%
\section{Conclusion\label{Sec.Conclusion}}

We characterized the optimum DM tradeoff for selective-fading MA MIMO channels and studied corresponding code design criteria. Our results show that, for a prescribed multiplexing rate tuple, the optimal DM tradeoff is determined by the dominant outage event. The systematic design of DM tradeoff optimal codes for the (selective-fading) MIMO MAC remains an important open problem.

\section*{Acknowledgment}
The authors would like to thank C.~Ak\c caba for insightful remarks on the code design criterion in Theorem \ref{Th.Proc} and for stimulating discussions on its proof. 
We would furthermore like to thank Prof. E.~Viterbo for pointing out the SNR-dependency of $\omega(\snr)$ in \eref{new} and, in particular, as a consequence thereof a problem with the arguments used to arrive at the statement of the code proposed in \cite{BadBel08} satisfying our code design criteria for DM tradeoff optimality. Finally, we would like to thank Prof. J.~C.~Belfiore for very helpful discussions.

%-----------------------------------------------------------------------------%
% bibliography                                                                %
%-----------------------------------------------------------------------------%
\bibliographystyle{IEEEtran}
\bibliography{biblio}

\end{document}